\documentclass[11pt]{article}
\usepackage{graphicx}
\graphicspath{{./figs/}} 
\usepackage{subfig}
\usepackage{float}
\usepackage{amsmath,amssymb,amsfonts, amsthm}
\usepackage{listings}
\usepackage{color}
\usepackage{xcolor}
\usepackage{textcomp}
\lstset{
	backgroundcolor=\color{white},  	  
	rulecolor=,
	language=R,			
        basicstyle=\scriptsize\ttfamily, 
        numbers=left,				
        numberstyle=\tiny,  	
        stepnumber=2,                   	
        numbersep=10pt,                  	
        upquote=true,
        aboveskip={0.5\baselineskip},
        belowskip={0.75\baselineskip},
        columns=fullflexible, 
        showstringspaces=false,		
        extendedchars=true,
        prebreak = \raisebox{0ex}[0ex][0ex]{\ensuremath{\hookleftarrow}},
        frame=b, 
        tabsize=2,                      		
        captionpos=t,                   		
        breaklines=true,                		
        breakatwhitespace=false,        	
        showtabs=false,
        showspaces=false,
        showstringspaces=false,
        identifierstyle=\ttfamily,
        fontadjust=true,
        keywordstyle=\color{cyan}, 
        ndkeywords={	std, ifstream, ofstream, TVBH, 
			cout, cerr, cin, endl, is_open, setw,
			pow, exp, sort, max, min, sin, cos,
			To  },
        ndkeywordstyle=\color{cyan},
        commentstyle=\color[rgb]{0.133,0.545,0.133},
        stringstyle=\color{red}, 
        xleftmargin=17pt,
         framexleftmargin=17pt,
         framexrightmargin=5pt,
         framexbottommargin=4pt,
         emph={label},
}


\newtheorem{theorem}{Theorem}[section]

\newtheorem{lemma}{Lemma}[section]
\newtheorem{definition}{Definition}[section]
\newtheorem{assumption}{Assumption}[section]
\theoremstyle{definition}
\newtheorem{remark}{Remark}[section]  
\newtheorem{example}{Example}[section]

\setlength{\topmargin}{-1in}
\setlength{\headheight}{0.6in}
\setlength{\textheight}{22.6cm}
\setlength{\textwidth}{16.6cm}
\setlength{\oddsidemargin}{0cm}

\begin{document}

\title{A No-Arbitrage Model of Liquidity in Financial Markets involving
Brownian Sheets\thanks{With the collaboration of Thanh Hoang}}
\author{David German\footnote{Claremont McKenna College} and Henry
  Schellhorn\footnote{Claremont Graduate University}}
\date{\today }
\maketitle


\begin{abstract}
We consider a dynamic market model where buyers and sellers submit limit
orders. If at a given moment in time, the buyer is unable to complete his
entire order due to the shortage of sell orders at the required limit price,
the unmatched part of the order is recorded in the order book. Subsequently
these buy unmatched orders may be matched with new incoming sell orders. The
resulting demand curve constitutes the sole input to our model. The clearing
price is then mechanically calculated using the market clearing condition.
We use a Brownian sheet to model the demand curve, and provide some
theoretical assumptions under which such a model is justified. 

Our main result is the proof that if there exists a unique equivalent
martingale measure for the clearing price, then under some mild
assumptions there is no arbitrage. We use the Ito-Wentzell formula to
obtain that result, and also to characterize the dynamics of the
demand curve and of the clearing price in the equivalent measure. We
find that the volatility of the clearing price is (up to a stochastic factor)
inversely proportional to the sum of buy and sell order flow density
(evaluated at the clearing price), which confirms the intuition that
volatility is inversely proportional to volume. We also demonstrate
that our approach is implementable. We use real order book data and
simulate option prices under a particularly simple parameterization of
our model.

The no-arbitrage conditions we obtain are applicable to a wide class of
models, in the same way that the Heath-Jarrow-Morton conditions apply to a
wide class of interest rate models. 
\end{abstract}

\section{Introduction}

Most liquidity models in mathematical finance abstract the trading
mechanism from the characterization of prices in the resulting
market. Our viewpoint is fundamentally different. In our model the
equilibrium prices of the assets are completely determined by the
order flow, which is viewed as an exogenous process. We model a market
of assets without a specialist, where every trader submits limit
orders, that is, for a buy order, the buyer specifies the maximum
price, or the buy limit price, that he/she is willing to pay, and, for
a sell order, the seller specifies the minimum price, or the sell
limit price, at which he/she is willing to sell \footnote{ There is no
  loss of generality in this statement. A buy market order can be
  specified in our model as a buy limit order with the limit price
  equal to infinity. Since we model assets with only positive prices,
  a sell market order can be specified in our model as a sell limit
  order with a limit price equal to zero.}.

If at a given moment in time, the buyer is unable to complete his
entire order due to the shortage of sell orders at the required limit
price, the unmatched part of the order is recorded in the order
book. A symmetric outcome follows in the case of incoming sell
orders. Subsequently these buy unmatched orders may be matched with
new incoming sell orders. We note that many electronic exchanges, such
as NYSE Arca \cite{arca:11}, operate like this.  Time-priority is used
to break indeterminacies of a match between an incoming buyer at at
the limit price superior to the ask price, i.e., the lowest limit
price in the sell order book. As the result, the equilibrium of
\textit{ clearing price} process is always defined.

Since the matching mechanism does not add any information to the
economy, all information about asset prices is included in the order
flow. Whether public exchanges should or should not reveal the
real-time data contained in the order book is an important issue,
which continues to preoccupy the financial markets community
\cite{WisWang:02}. Our theoretical framework accommodates either
viewpoint, but our empirical application is better tailored to the
viewpoint that order books are public information. The current
blossoming of the trading activity \cite{Engle:00},
\cite{BollLitvTauch:2006}, \cite{Haut:08}, \cite{StoikAvell:08},
\cite{BanRussZhu:09} seems to confirm our viewpoint that traders are
(i) interested in understanding order book information, and (ii) trade
on that information.

We do not address in this paper the issue of differential information. The
market microstructure literature (such as \cite{Kyle:85}, and all following
models) considers various models of trading involving uninformed traders,
called noise traders, and one or several informed traders. One of the key
results of the Kyle model is that, given the information available to
the noise
traders, the resulting price process is a martingale in the appropriate
measure, whereas it may not be for the informed traders. As a consequence we
do not believe that abstracting issues of differential information is
limiting. The order books reflect all the public information. We will
show (under certain conditions) that the clearing price is a martingale in the
filtration corresponding to the public information, and not private
information.

There are roughly two different class of models in the liquidity
literature.  The first class of models (\cite{Jarr:92},
\cite{Jarr:94}, \cite{PlatSch:98}, \cite{PapanSirc:98},
\cite{Frey:98}, \cite{SchWilm:00}, \cite{BankBaum:04},
\cite{RogSin:10}) considers the action of a large trader who can
manipulate the prices in the market.  There are mainly two different
types of strategies a large trader can employ to that effect. The
first one is to corner the market, and then squeeze the shorts. The
second one is to "front-run one's own trades". While some exchanges
have rules to curtail the cornering of the market, the front-running
seems more difficult to ban from an exchange. It is known in
discrete-time trading, that, if there is no possibility of arbitrage
for small traders in periods where the large investor does not trade,
then there is no market manipulation strategy. In this paper, we
assume that Jarrow's conditions (\cite{Jarr:94}) for the absence of
the market manipulation strategy in discrete time hold. We exploit
then the theoretical results from \cite{BankBaum:04}, and
\cite{KallRhein:09} to prove the absence of arbitrage in our
continuous-time model, under certain conditions.

The second class of models (\cite{CetinJarrProt:04}, \cite{CetRog:07}, 
\cite{CetSonTou:10}, \cite{GokaySoner:11}
abstracts the issues of the market manipulation away, and considers all traders
as price-takers. In particular, \cite{CetinJarrProt:04} introduced an exogenous
residual supply curve against which an investor trades. The investor trades
market orders, and his/her order is matched instantaneously. As a
consequence of the instantaneity, it is plausible for \cite{CetinJarrProt:04} to assume the
"price effect of an order is limited to the very moment when the order is
placed in the market" (dixit \cite{BankBaum:04}), so that that the residual
supply curve at a future time is statistically independent from the order
just matched. For us however, since all information is contained in the
order flow, this assumption is not plausible, since it does not explain how
prices can incorporate information from the arrival of new orders.

The key to our approach is to distinguish between what we call
\textit{the cross
orders}, i.e., the orders submitted at a limit price such that they are likely to
be instantaneously matched, and what we call \textit{the uncross orders}, i.e.,
orders which will spend a positive amount of time in the order books before
being either matched or cancelled. We are not aware of workable assumptions
in the literature that warrant the absence of arbitrage strategy
involving the 
uncross orders. For cross orders however the task is simpler, and our
approach takes advantage of the results presented in \cite{BankBaum:04},
which assume that trading is immediate. Anecdotal evidence shows however
that market manipulation strategies tend in practice to occur over short
periods of time, casting doubts on the practicality of implementing market
manipulation strategies with uncross orders.

In our model all the information is contained in a Brownian sheet. The
Brownian sheet drives the dynamics of the demand curve. The advantage of
using a Brownian sheet is that we have the same cardinality of independent
sources of noise (namely, the cardinality of a real interval) as the
cardinality of the set of exogenous stochastic processes. Our methodology
generalizes the nonlinear partial differential equation approaches contained
in the literature cited above, in the same way that the Heath-Jarrow-Morton
methodology generalizes term structure models.

The main result in our article is to prove that if there exists a
unique equivalent martingale measure $\mathbb{Q}$ for the clearing
price, then (under some mild conditions) there is no arbitrage. We use
the Ito-Wentzell formula to obtain that result, and also to
characterize the dynamics of the demand curve and of the clearing
price in the measure $\mathbb{Q}$. We find that the volatility of the
clearing price is (up to a stochastic factor) inversely proportional to the sum of
the buy and sell order flow densities (evaluated at the clearing
price), which confirms the intuition that the volatility is inversely
proportional to the volume. We also demonstrate that our approach is
implementable. Although we do not prove a second fundamental theorem
of asset pricing, we naively use a special parameterization of our
model to price options. As in the early days of the
Heath-Jarrow-Morton methodology, we use only historical estimation (in
our case, of the order book) to fit our model and to solve for the
market price of risk. We expect that, should this paper meet with
interest around practitioners, market-implied implementations will see
the day.  Unsurprisingly, we obtain a smile curve for the implied
volatility. We note that this particular feature is not a very strong
sign of the adequacy of our approach to model asset prices, as most
models that came after \cite{BlckSch:73} result in a smile curve for
the implied volatility. Although limited, our results are however
encouraging. They show that a fairly demanding theoretical model can
be easily implemented.

The structure of the paper is as follows. In Section \ref{sec:model}, we introduce two
classes of models, one with atomistic traders only, and one with atomistic
traders and a large trader. In the first one, almost by definition, the
demand curve turns out to be continuous in time. In the second one,
continuity (under a set of assumptions) is proved. For both models, we
prove that there exist one or several martingale measures $\mathbb{Q}$ for
the clearing price and that under these measures there is no arbitrage. In
Section \ref{sec:char}, we characterize the price process under the risk-neutral measure,
giving the conditions under which it is unique. In Section \ref{sec:empirical}, we describe our
data set and the methodology we used to extract the relevant
information. We also present
our implementation, namely a simulation of option prices under the
risk-neutral measure.

\section{Model}\label{sec:model}
 
We assume a filtered probability space ($\Omega ,\mathcal{F},\{\mathcal{F}%
_{t}\},\mathbb{P}$). We will see later what processes generate the
filtration.

\bigskip

\subsection{The Market Mechanism}

\noindent A buy\ limit order specifies how many shares a trader wants to
buy, and at what maximum price he is willing to buy them. We call this price
the (buy) \textit{limit price}. A buy\ limit order specifies how many shares
a trader wants to buy, and at what maximum price he is willing to buy them.
We call this price the (sell)\ \textit{limit price}. Both buy and sell limit
prices are denoted by $p$, and should not be confused with the clearing
price, which is denoted by $\pi (t)$. The unmatched buy and sell orders are
kept in order books until they are either cancelled or matched with an
incoming order. An incoming order is matched with the order in the opposite
side of the market which has the best price. The clearing price of the
transaction is equal to the limit price of the order in the book, and not of
the incoming order. Partial execution is allowed, and ties are
resolved by the
time-priority. Below we present an example of the matching mechanism in
discrete time, that is at most one order arrives at time $t\in \{0,1,2,..\}.$

\begin{example}

Suppose that the clearing price at time $0$ is any price $\pi (0)\in \lbrack
100,120]$. After clearing, that is, when $0<t<1$ we suppose that the order
book contains the following orders

\begin{equation*}
\begin{tabular}{|l|l|}
\hline
\multicolumn{2}{|l|}{Buy Order Book} \\ 
Price & Quantity \\ \hline\hline
100 & 10 \\ \hline
\end{tabular}%
\text{ \ \ \ \ \ }%
\begin{tabular}{|l|l|}
\hline
\multicolumn{2}{|l|}{Sell Order Book} \\ 
Price & Quantity \\ \hline\hline
120 & 10 \\ \hline
130 & 10 \\ \hline
\end{tabular}%
\end{equation*}

At time $t=1$ a buy order arrives with a limit price of \$125, and a
quantity of 15. The exchange matches it with the best sell order, i.e., the
one with a sell limit price of \$120. However, the execution is only partial,
and the remainder of the buy order is placed in the order book at the limit
price of \$120, resulting in the following order book:

\begin{equation*}
\begin{tabular}{|l|l|}
\hline
\multicolumn{2}{|l|}{Buy Order Book} \\ 
Price & Quantity \\ \hline\hline
100 & 10 \\ \hline
125 & 5 \\ \hline
\end{tabular}%
\text{ \ \ \ \ \ }%
\begin{tabular}{|l|l|}
\hline
\multicolumn{2}{|l|}{Sell Order Book} \\ 
Price & Quantity \\ \hline\hline
130 & 10 \\ \hline
\end{tabular}%
\end{equation*}

The clearing price at time 1 is equal to the limit price of the sell order,
i.e.:

\begin{equation*}
\pi (1)=120.
\end{equation*}
\end{example}

This example illustrates several properties of the limit order markets. First,
the clearing price is always defined, and can assume any positive value
\footnote{%
We do not consider markets for swaps, where the prices can be negative.}.

Second, it is not inconceivable that an incoming order "crosses" the order
book, i.e., for the case of a buy order, that it best higher limit price
than the best sell order limit price, or best ask, since the buyer does not
lose a cent. Crossing the book is indeed advantageous for two reasons:
first, it allows for faster execution. In our example, had the buyer
submitted an order at price \$130 he would have bought the complete quantity
of shares (15)\ that he desired, rather than waiting an indeterminate amount
of time until enough sell orders arrive at his limit price. Second, suppose
that several buy orders are submitted at the same time. In case the demand
exceeds the supply at the best ask, the buy orders with the highest limit
price are executed first. Our own data analysis (see section \ref{sec:empirical}) shows
that few orders cross the NYSE ArcaBook \cite{arca:11}. This is consistent with the theory of
optimal order book placement suggested by Rosu \cite{Rosu:09}.
 
\begin{assumption}\label{as:1}
Buy and sell limit prices can assume any real value
between $0$ and $S$. They are usually denoted by $p$. Orders can be
submitted to the market at any time $t\in \mathbb{R}^{+}$.
\end{assumption}

\bigskip 

\subsection{The Brownian Sheet}

We now turn our attention to the continuous-time case. Note that we do
not prove convergence of a discrete-time model to our continuous-time
model. We take the latter as a given, plausible model of the
market. We start with a probability space $%
(\Omega ,\mathcal{F},\mathbb{P})$ satisfying the usual conditions. The
uncertainty is described by a one-dimensional Brownian sheet $W(t,s)$,
which is a continuous version of the construction (4.20) in
\cite{DaPratoZab:92} p. 100, i.e.:

\begin{equation*}
W(t,s)=\sum_{j=1}\beta _{j}(t)\int\limits_{0\leq \alpha \leq s}g_{j}(\alpha
)d\alpha, \text{ \ for\ }t\geq 0, 0\leq s\leq S
\end{equation*}

Here $\{\beta _{j}\}$ is a family of independent real-valued standard Wiener
processes, and $\{g_{j}\}$ is an orthonormal and complete basis for some
Hilbert space. The variable $t$ refers to time and the variable $s$
identifies the "factor" information necessary to model a large collection of
processes identified by the variable $p$, with $0\leq p\leq S$. While the
bound $S$ is usually taken to be equal to one in the literature, we assume
instead that $S$ is a large value for ease of modeling, as will become
clear later. The filtration $\{\mathcal{F}_{t}\}$ is generated by the
collection of Wiener processes $\{\beta _{j}\}$. A stochastic integral of
an $\mathcal{F}_{t}$-adapted integrand $\sigma (t,s)$ with respect to the
Brownian sheet is denoted by:

\begin{equation*}
I(T,S)=\int_{0}^{T}\int_{0}^{S}\sigma (t,s)W(dt,ds)
\end{equation*}

\subsubsection{A Market with Atomistic Traders}

\begin{definition}\label{def:1}
The net demand curve $Q$ is a function $[0,S]\times 
\mathbb{R}^{+}\times \Omega \mathbb{\rightarrow R}$, which value $%
Q(p,t,\omega )$ is equal to the difference between the quantity of shares%
\textbf{\ available} for purchase and the quantity of shares \textbf{%
available} for sale at price $p$ and at time $t$. For each $p$ the
stochastic process $Q(.,t,.)$ is $\mathcal{F}_{t}$-adapted. The random
variables $Q(p,t,.)$ are assumed to be uniformly bounded in $p$ and $t$ almost
surely.
\end{definition}

The net demand curve represents then the information available in the
(limit)\ order books of the exchange.

\begin{remark}\label{rem:1}
As we will see, the process $Q$ will be assumed to be continuous in time,
and we will prove that the clearing price is continuous in this model. Thus
we do not need to specify in this model whether $Q$ represents the net
demand just before clearing, or right after clearing, since these quantities
move continuously. The same remark will not apply a priori in our model with
a large trader.
\end{remark}

\begin{remark}\label{rem:2}
The net demand curve is decreasing in $p$.
\end{remark}

\begin{definition}\label{def:2}
  The clearing price $\pi (t)$ is an $\mathcal{F}_{t}$-adapted
  stochastic process which either satisfies
\begin{equation}
Q(\pi (t_{-}),t)=0,  \label{pi}
\end{equation}
when there is a solution to (\ref{pi}), or is otherwise defined by
continuation, i.e., $\pi (t)$ is equal to the value of $\pi $ at the latest
time $s<t$ for which there was a solution to $Q(\pi (s_{-}),s)=0$.
\end{definition}

When the net demand curve is continuous in time, we have of course,
for any $t$:

\begin{equation}
Q(\pi (t),t)=Q(\pi (t), t_{+}).
\end{equation}

\begin{definition}\label{def:3}
A limit order submitted at time $t$ \textit{crosses the
market} at time $t$ if:

\begin{itemize}
  \item either it is a buy order with limit price $p>\pi (t_{-})$
  \item or it is a sell order with limit price $p<\pi (t_{-})$
\end{itemize}

We call these orders\textit{\ cross orders}. All the other orders are called 
\textit{uncross orders}.
\end{definition}

In practice it is rare that limit orders cross the market, indeed there are
equilibrium models in which such orders are not rational \cite{Rosu:09}.

\bigskip

\begin{remark}\label{rem:3}
There is a more general way to model a market. We could
accumulate all of the buy order quantities with limit price higher than $p$
that enter the system (and withdraw the cancelled quantities when an order
is cancelled) into a cumulative demand curve $\mathcal{D}(p,t)$. Likewise we
could accumulate all of the sell order quantities with limit price less than $p$
that enter the system (and also withdraw the cancelled quantities) into a
cumulative supply curve $\mathcal{S}(p,t)$. We call these curves
"cumulative" because we include the order quantities that are matched in
them, and thus, if no order is cancelled, these curves increase in $t$. We
observe that:

\begin{equation*}
Q=\mathcal{D}-\mathcal{S}
\end{equation*}

Thus, modeling only the net demand curve is reductive. While there is
a (not very natural) way to extend the results of this paper to a
model including both the cumulated demand and supply curves, we
decided to present here only our simpler model for the following
reasons. As we argued in the previous section, orders rarely "cross",
there is a numerical instability when modeling both $\mathcal{D}$ and
$\mathcal{S}$.
\end{remark}

\begin{remark}\label{rem:4}
  Due to Assumption \ref{as:4}, the clearing price is limited to
  take values between $0$ and $S$. The frontier $\pi (t)=0$
  corresponds to a bankruptcy, and the frontier $\pi (t)=S$
  corresponds to a higher limit (say $%
  S=\$1M$)\ set by the exchange to prevent excessive speculation.
\end{remark}

\begin{remark}\label{rem:5}
Questions of uniqueness of the clearing price will be
addressed when we describe the stochastic differential equation that the
clearing price (should it exist) satisfies.
\end{remark}

The next assumption is standard. Note that by "transaction costs" we do not
mean the liquidity cost incurred, but an additional cost per transaction that
the exchange would charge the traders. For instance, a buyer would pay per
share a cost $\pi (t)+c$, with $c>0$.

\begin{assumption}\label{as:2}
The market is frictionless, i.e. $c=0$.
\end{assumption}

We now develop a model for the order quantity specified in any
order. Modeling a net demand curve which is twice differentiable with
respect to price is clearly easier than modeling a discrete curve, and
we assume it. It can occur when there is an uncountably infinite
amount of traders, and traders are atomistic. We need however to rule
out a degenerate case when a subset of non-zero measure of all the
traders agree on the limit price, thus generating a discontinuity in
the demand. This can be justified by assuming differential information
amongst traders about the real value of the stock, as in the market
microstructure literature. Note that the atomistic traders do not need
to be noise traders for the market to be consistent. Indeed, we can
(but do not need to)\ assume that all traders know all the information
contained in the order books at all times. If they do know this
information, as specified in the introduction, the clearing price does
not bring any extra information as the traders can compute the
clearing price by themselves at all times \footnote{%
  In order to account for the delay in information processing, it
  would be more plausible to assume that the information available to
  all traders at time $t$ corresponds to $\mathcal{F}_{t-}$ \ and not
  $\mathcal{F}_{t}$ but this would complicate the model significantly,
  while yielding not much extra conceptual value.}. Making the order book public
information can be an important assumption in some settings. Indeed,
our empirical studies (see Section \ref{sec:empirical}) show that the
order flow is tightly concentrated around the current clearing price. This
confirms the obvious intuition that traders are strategic and choose
their limit price based on their (slightly delayed)\ knowledge of the
clearing price. In this paper we will fit a parametric model to the
market which exploits this intuition. Our model is fairly
robust when the net demand is modelled as a function of the difference in
price $p-\pi (t)$, but not robust when it is modelled (as discussed above) as a
function of the limit price $p$ without a reference to the whole order
book information.

\begin{assumption}\label{as:3}
There is a continuum of atomistic buyers and sellers
who trade on the market. The resulting net demand curve $Q$ is twice
differentiable in price $p$ and continuous in $t$. We assume that
\begin{align*}
&\left.\frac{\partial Q}{\partial p}\right|_{p=0} = \left.\frac{\partial Q}{\partial p}
\right|_{p=S}=0 \\
&\left.\frac{\partial Q}{\partial p}\right|_{p} <0, \text{ \ \ \ for }
0<p<S
\end{align*}
\end{assumption}

For the moment we just state the stochastic differential equation that
$Q$ satisfies, assuming that the resulting net demand curve satisfies
Assumption \ref{as:3}. We defer the task of showing examples where
these conditions are satisfied to the next chapter, since in the next
model the exact same conditions will have to apply, and it is more
convenient for the reader to have a single place with the exact
specification of the model. Thus
\begin{eqnarray}
dQ(p,t) &=&\mu _{Q}(p,t)dt+\sigma
_{Q}(p,t)\int_{s=0}^{S}b_{Q}(p,s,t)W(ds,dt), %
\text{ \ for \ } 0\leq p\leq S  \label{sde} \\
Q(p,0) &=&Q_{0}(p), \text{ \ for \ }0\leq p\leq S  \notag
\end{eqnarray}%

The coefficients $\mu _{Q}$, $\sigma _{Q}$, and $b_{Q}$ are
$\mathcal{F}_{t}$-adapted. For the moment we just assume that they
are such that the solution to (\ref{sde}) exists, is unique, and is
uniformly bounded in $p$ and $t$ almost surely. Besides, for every
$p$, the process $Q(p,.)$ is a semimartingale. Also, we enforce:%
\begin{equation*}
\int_{s=0}^{S}b_{Q}^{2}(p,s,t)ds=1, \text{ \ for every \ } p
\text{ and } t.
\end{equation*}%

\begin{definition}\label{def:4}
  A (trading) strategy $\theta =(\theta (t))$ is a semimartingale that
  represents a number of shares held by the investor at each point in
  time. If the strategy is self-financing (see e.g. \cite{BankBaum:04}
  for a definition)\ the process $\beta ^{\theta }$ representing the
  value of the cash account is uniquely defined.
\end{definition}

We refer the reader to \cite{Jarr:94} for a definition of market
manipulation strategies in discrete time.

\begin{definition}\label{def:5}
For every real-valued $x$ an inverse process $P(x,t)$ satisfies
\begin{equation}
Q(P(x,t),t)=x.  \label{inv}
\end{equation}
The process $P(x, t)$ is undefined when (\ref{inv})\ does not admit a solution. 
\end{definition}

\begin{remark}
Since $Q$ is strictly monotonic in $p$, then whenever $Q$ exists, it is also
unique.
\end{remark}

\begin{definition}[(3.1) in \cite{BankBaum:04}]\label{def:6}
The asymptotic liquidation proceeds $L(\vartheta ,t)$
are defined as:

\begin{equation*}
  L(\vartheta ,t)=\int_{0}^{\vartheta }P(x,t)dx.
\end{equation*}%
\end{definition}

This definition is from \cite{BankBaum:04} for the proceeds of a fast
liquidation strategy of a large trader from $\vartheta $ to $0$. The
intuition behind this process will become more clear when we consider
a market with a large trader.

\begin{definition}\label{def:7}
The real wealth process achieved by a self-financing
trading strategy $\theta $ is given by
\begin{equation*}
  V^{\theta }(t)=\beta ^{\theta }(t)+L(\theta(t), t)
\end{equation*}
\end{definition}

\begin{lemma}[Lemma 3.2 in \cite{BankBaum:04}]
  For any self-financing semimartingale strategy $\theta$, the
  dynamics of the real wealth process $V^{\theta }$ are given by
\begin{align}
  &V^{\theta }(t)-V^{\theta }(0_{-}) \label{eq:bb04} \\
  &\quad = \int_{0}^{t}L(\theta (u_{-}),du)-\frac{1}{2
  }\int_{0}^{t}P^{\prime }(\theta (u_{-}),u)d[\theta ,\theta
  ]_{s}^{c}-\sum_{0\leq u\leq t}\int_{\theta (u_{-})}^{\theta
    (u)}\{P(\theta (u),u)-P(x,u)\}dx.\nonumber
\end{align}
\end{lemma}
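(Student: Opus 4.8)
The plan is to derive the stated dynamics by applying the Ito-Wentzell formula to the composition defining the liquidation proceeds, treating $L(\theta(t),t) = \int_0^{\theta(t)} P(x,t)\,dx$ as a random field $F(\vartheta,t) := \int_0^\vartheta P(x,t)\,dx$ evaluated along the semimartingale path $\vartheta = \theta(t)$. First I would record the differential of the field in its spatial argument: $\partial_\vartheta F = P(\vartheta,t)$ and $\partial_{\vartheta\vartheta}^2 F = P'(\vartheta,t)$, where $P'$ denotes the derivative of the inverse process in its first argument (which exists and is well-behaved because $Q$ is strictly monotone and twice differentiable in $p$ by Assumption \ref{as:3}, so $P$ inherits smoothness via the inverse function theorem). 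I would also write the time-differential of the field holding $\vartheta$ fixed as $F(\vartheta,dt) = \int_0^\vartheta P(x,dt)\,dx = L(\vartheta,dt)$, since differentiation in $t$ commutes with the $x$-integral.

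Next I would invoke the generalized Ito formula for a semimartingale substituted into a random field (the Ito-Wentzell formula, whose use the paper announces in its introduction). Applied to $V^\theta(t) = \beta^\theta(t) + F(\theta(t),t)$, this produces four contributions to $dF(\theta(t),t)$: the field's own time-drift/martingale part $F(\theta(t_-),dt) = L(\theta(t_-),dt)$; the first-order spatial term $\partial_\vartheta F\,d\theta = P(\theta(t_-),t)\,d\theta$; the pure second-order spatial term $\tfrac{1}{2}\partial_{\vartheta\vartheta}^2 F\,d[\theta,\theta]^c = \tfrac{1}{2}P'(\theta(t_-),t)\,d[\theta,\theta]^c$; and the cross-variation term between the spatial gradient $P(\cdot,t)$ and $\theta$. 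The self-financing property from Definition \ref{def:4} is what lets me combine $d\beta^\theta$ with the $P(\theta(t_-),t)\,d\theta$ term: self-financing means the change in the cash account exactly offsets the cost of trading at the prevailing marginal price, so the continuous first-order spatial term and its cross-variation partner cancel against $d\beta^\theta$, leaving only the field's intrinsic increment $L(\theta(t_-),du)$, the curvature correction, and the jump contributions.

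The jump bookkeeping is where I would be most careful, and I expect it to be the main obstacle. The continuous Ito-Wentzell expansion captures only the continuous-path behavior; at a jump time $u$ the investor moves from holding $\theta(u_-)$ to $\theta(u)$ shares, and the proceeds of that discrete rebalancing are not the linearization $P(\theta(u),u)\,\Delta\theta$ but rather the exact integral of the marginal price across the executed block. Accordingly I would add, at each jump, the discrepancy between the true jump increment of $V^\theta$ and what the continuous formula would predict, namely $-\int_{\theta(u_-)}^{\theta(u)}\{P(\theta(u),u)-P(x,u)\}\,dx$, which is precisely the area between the horizontal line at the post-jump marginal price $P(\theta(u),u)$ and the demand-inverse curve $P(\cdot,u)$ over the traded interval, and is nonpositive by monotonicity of $P$ (reflecting the liquidity cost of a discrete trade). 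Summing these over $0 \le u \le t$ and assembling the continuous pieces yields \eqref{eq:bb04}.

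Since this is Lemma 3.2 of \cite{BankBaum:04} quoted verbatim, I would in fact lean on that reference for the full rigorous argument and present only this outline; the real work in the present paper is not re-proving it but justifying that the field $P(x,t)$ arising from the Brownian-sheet model \eqref{sde} satisfies the regularity hypotheses ($P$ twice differentiable in $x$, $P(\cdot,t)$ a semimartingale with the requisite integrability) under which the Ito-Wentzell manipulation above is legitimate. The strict monotonicity and boundedness built into Definition \ref{def:1}, Assumption \ref{as:3}, and the standing assumptions on $\mu_Q,\sigma_Q,b_Q$ supply exactly what is needed to transfer those properties from $Q$ to its inverse $P$.
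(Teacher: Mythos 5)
The paper itself offers no proof of this lemma: it is imported verbatim as Lemma 3.2 of \cite{BankBaum:04}, and your outline reconstructs essentially the argument given in that reference --- the generalized It\^o (It\^o--Wentzell) expansion of the field $L(\vartheta,t)=\int_0^\vartheta P(x,t)\,dx$ along the semimartingale $\theta$, cancellation of the first-order spatial and cross-variation terms against the self-financing cash account, and explicit bookkeeping of the jump corrections --- so your approach coincides with the source's, and deferring to that reference for full rigor is exactly what the paper does. One caveat: your parenthetical claim that each jump contribution is nonpositive ``by monotonicity of $P$'' relies on Bank--Baum's convention of an \emph{increasing} price field, whereas under this paper's conventions $Q$ is strictly decreasing in $p$ (Assumption \ref{as:3}), hence $P$ is decreasing in $x$ and that sign reverses; this affects only the economic interpretation of the jump term as a liquidity cost, not the validity of the identity \eqref{eq:bb04}.
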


\begin{definition}\label{def:8}
An arbitrage (strategy) is a self-financing trading
strategy $\theta $ such that $V^{\theta }(0_{-})=0$ and
\begin{eqnarray*}
\mathbb{P}(V^{\theta }(t) &>&0)>0, \\
\mathbb{P} (V^{\theta }(t) &>&0)\geq 0.
\end{eqnarray*}
\end{definition}

\begin{theorem}\label{thm:1}
Suppose in addition to our standing assumptions that

C1) for self-financing strategies involving only cross orders, Jarrow's
\cite{Jarr:94} discrete-time conditions for absence of market manipulation strategy
hold,

C2) no arbitrage strategy involves uncross orders,

C3) the volatility $\sigma _{Q}(p,t)$ is bounded away from zero, uniformly
in $p$,

C4) there is no path such that $Q(S,t)\geq 0$ or $Q(0,t)\leq 0$.

\noindent Then

F1) there exists at least one martingale measure $\mathbb{Q}$ for $\int
L(\vartheta ,dt)$,

F2) there is no arbitrage strategy,

F3) the clearing price $\pi (t)$ is continuous,

F4) any such measure $\mathbb{Q}$ is also a martingale measure for $\pi (t)$.
\end{theorem}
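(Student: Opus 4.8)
The plan is to establish the four conclusions in the order F3, F1, F2, F4, since continuity of the clearing price and existence of the equivalent martingale measure underpin everything else. \textbf{For F3}, I would first fix $\omega$ and $t$ and note that C4 gives $Q(0,t)>0>Q(S,t)$, while Assumption~\ref{as:3} makes $Q(\cdot,t)$ continuous and strictly decreasing on $[0,S]$; the intermediate value theorem then yields a \emph{unique} root $\pi(t)\in(0,S)$ of $Q(\pi(t),t)=0$, so the clearing price is always genuinely defined (the continuation clause of Definition~\ref{def:2} is never invoked) and coincides with $P(0,t)$. Continuity in $t$ then follows from the implicit function theorem applied to $Q(\pi(t),t)=0$: $Q$ is continuous in $t$ and $C^2$ in $p$ with $\partial_p Q(\pi(t),t)$ bounded away from $0$, so $t\mapsto\pi(t)$ inherits the continuity of $Q$.

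\textbf{For F1} the idea is to invoke the equivalence, established in \cite{BankBaum:04} and \cite{KallRhein:09}, between absence of market manipulation and existence of a martingale measure for the liquidation-proceeds field $L(\vartheta,\cdot)$. Cross orders are, by Definition~\ref{def:3}, exactly those matched immediately, so the immediacy hypothesis behind the Bank--Baum wealth decomposition~(\ref{eq:bb04}) applies to them verbatim. Condition C1 supplies Jarrow's~\cite{Jarr:94} discrete-time no-manipulation property for cross-order strategies; feeding this into the Bank--Baum/Kallsen--Rheinl\"ander characterization produces a measure $\mathbb{Q}\sim\mathbb{P}$ under which each $t\mapsto L(\vartheta,t)$ is a martingale. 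Condition C3 enters precisely to keep $\mathbb{Q}$ \emph{equivalent} rather than merely absolutely continuous: with $\sigma_Q$ bounded away from $0$ uniformly in $p$, the market price of risk defining the density is well-defined and the Radon--Nikodym derivative stays strictly positive.

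\textbf{For F2} I would substitute an arbitrary self-financing strategy $\theta$ into~(\ref{eq:bb04}). The leading term $\int_0^t L(\theta(u_-),du)$ is a stochastic integral against the $\mathbb{Q}$-martingales $L(\vartheta,\cdot)$, hence a $\mathbb{Q}$-local martingale, while the quadratic-variation and jump terms are finite-variation liquidity costs whose sign---dictated by the monotonicity of $P$, equivalently the concavity of $\vartheta\mapsto L(\vartheta,t)$---renders $V^\theta$ a $\mathbb{Q}$-supermartingale for cross-order strategies. Then $V^\theta(0_-)=0$ gives $E^{\mathbb{Q}}[V^\theta(t)]\le 0$, incompatible with $V^\theta(t)\ge 0$ a.s.\ together with $\mathbb{P}(V^\theta(t)>0)>0$; uncross-order arbitrages are excluded outright by C2, and a general strategy is treated by splitting it into its cross and uncross parts. \textbf{For F4} I would recover $\pi$ from $L$ by differentiating at $\vartheta=0$: since $L(\vartheta,t)/\vartheta\to P(0,t)=\pi(t)$ as $\vartheta\downarrow 0$ and each $L(\vartheta,\cdot)/\vartheta$ is a $\mathbb{Q}$-martingale, the uniform bound $0\le P(x,t)\le S$ supplies the domination needed to exchange the limit with the conditional expectation, yielding $E^{\mathbb{Q}}[\pi(t)\mid\mathcal{F}_s]=\pi(s)$.

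\textbf{The hard part} will be F1: converting the discrete-time, manipulation-free hypothesis C1 into a continuous-time equivalent martingale measure for the entire proceeds field. The delicate points are verifying that the Bank--Baum immediacy framework really applies to cross orders, controlling the passage from Jarrow's discrete setting to continuous time, and leaning on C3 to secure equivalence rather than mere absolute continuity of $\mathbb{Q}$. The supermartingale sign bookkeeping in F2, and the clean splitting of a general strategy into cross and uncross components, are the next most error-prone steps.
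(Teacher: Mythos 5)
Your proposal is correct in substance and, on the two central facts F1 and F2, follows the same route as the paper: verify the Kallsen--Rheinl\"ander conditions so that their Theorem 3.5 produces an equivalent measure $\mathbb{Q}$ under which the proceeds field $L(\vartheta,\cdot)$ is a (local) martingale, then run the supermartingale argument on the Bank--Baum wealth decomposition \eqref{eq:bb04}, with C2 disposing of uncross orders. Where you genuinely diverge is F3 and F4, and in both cases your route is more elementary and self-contained. The paper proves Theorem~\ref{thm:1} only by reference to Theorem~\ref{thm:2}, whose continuity argument for $\pi$ is necessarily indirect (suboptimality of discontinuous strategies, via Lemma 3.3 of \cite{BankBaum:04}) because there the large trader's curve $Q_L$ is \emph{not} assumed continuous in time; you instead exploit the fact that in the purely atomistic model $Q$ \emph{is} continuous in $t$ by Assumption~\ref{as:3}, so strict monotonicity in $p$, C4, and the intermediate value theorem give a unique root $\pi(t)\in(0,S)$ depending continuously on $t$. (You do not even need the implicit function theorem, nor your claim that $\partial_p Q$ is bounded away from zero --- which Assumption~\ref{as:3} does not grant, since $\partial_p Q$ vanishes at the endpoints: joint continuity plus strict monotonicity suffice.) Similarly, for F4 the paper argues via a ``new atomistic trader'' and the identity $\int_0^t L(\theta(u),du)=\int_0^t\theta(u)\,d\pi(u)$, which is somewhat heuristic; your difference-quotient argument, $L(\vartheta,t)/\vartheta\to P(0,t)=\pi(t)$ together with the bound $0\le P\le S$ and conditional dominated convergence, is a rigorous rendering of the same idea. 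What the paper's route buys is a single proof covering both market models; what yours buys is a direct proof of Theorem~\ref{thm:1} that never passes through the large-trader machinery (though it would not extend to Theorem~\ref{thm:2}, where continuity of $Q$ must be derived rather than assumed).

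Two caveats. First, your F1 stays schematic exactly where the paper does its technical work: the verification of (RF)/(UB)/(UI) rests on Lemma~\ref{lem:2}, i.e., on the Ito--Wentzell dynamics of the inverse curve $P(x,t)$, so that the market price of risk $\lambda^{(x)}=\mu_P/\sigma_P$ is well defined and bounded; C3 enters both there (non-vanishing $\sigma_P$, hence (UB)) and in making the quadratic variation of $L(\vartheta,\cdot)$ strictly increasing (condition (iv) of (RF)). Your reading of C3 as securing equivalence of $\mathbb{Q}$ is compatible with this but incomplete. Second, in F2 a local martingale plus nonpositive cost terms is a supermartingale only for admissible strategies (proceeds bounded from below); the paper's quotation of \cite{KallRhein:09} makes this restriction explicit, and your write-up should as well.
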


\begin{proof}
All arguments of the proof of Theorem \ref{thm:2} can be used,
replacing the (non-infinitesimal)\ strategy $\theta $ of the large trader by
an infinitesimal strategy $\theta dp$ of an atomistic trader. Since Theorem
\ref{thm:2} is more general, we will only show the proof of Theorem \ref{thm:2}.
\end{proof} 

A market composed only of atomistic traders is not very realistic. In the
next section, we will show that under certain conditions a large trader can be
added to our market, and the no-arbitrage conditions can still be obtained.

\subsection{A Market with Atomistic Traders and a Large Trader}

A continuous (in time) net supply curve can also arise when large traders
decide to split their large orders into atomistic orders. \cite{CetinJarrProt:04} show that this is indeed an optimal
strategy in their model. The same decision turns out to be optimal in our
model, under certain conditions which are elaborated in the next
assumptions. We will prove this fact in Theorem \ref{thm:2}, thus making our model is
self-consistent.

Compared to \cite{CetinJarrProt:04}, there are two main conceptual difficulties when adding a
large trader into the model. First, the trader may submit orders which are
not instantaneously matched, unlike the market orders in the \cite{CetinJarrProt:04} model.
This is why we consider separately in this section the cross orders and the
uncross orders. The cross orders will be matched instantaneously, and thus
they behave like market orders in the \cite{CetinJarrProt:04} model, and continuity (in time)
can be proved. For uncross orders, one would need a more detailed economic
model to specify in which cases it is advantageous for traders to submit a
continuous (in time) order flow. Rather than going into these details, we
assume that this holds true. In any event, market manipulation strategies
(such as market cornering) are probably more likely to be implemented with
cross orders than with uncross orders.

The second difficulty is that we have to prove that the large trader
cannot manipulate the market. The conditions in \cite{Jarr:94} apply when
traders can submit orders at discrete time intervals. \cite{BankBaum:04}
and \cite{KallRhein:09} show the conditions under which the
discrete-time conditions extend to continuous time, and our proof of
consistency of the market with a large trader will consist in part in
checking that \cite{KallRhein:09} conditions apply.

Since the continuity in time of the net demand is not assumed any more
(see Remark \ref{rem:1}), we need now to distinguish between the
incoming orders (or the order flow) of the large trader, and the order
book position.

\begin{definition}\label{def:9}
The net demand curves of a large (atomistic)
trader $Q_{L}$ ($Q_{A}$) is a function $[0,P]\times \mathbb{R}^{+}\times
\Omega \mathbb{\rightarrow R}$ whose value $Q_{L}(p,t,\omega )$ \ ($%
Q_{A}(p,t,\omega )$)\ is equal to the difference between the quantity of
shares \textbf{submitted} for purchase and the quantity of shares \textbf{%
submitted} for sale at price $p$ at time $t$. For each $p$ the stochastic
processes $Q_{L}(.,t,,)$ and $Q_{A}(.,t,.)$ are $\mathcal{F}_{t}$-adapted
semimartingales. As before the net demand of the atomistic traders satisfies
\begin{eqnarray}
  dQ_{A}(p,t) &=&\mu _{Q_{A}}(p,t)dt+\sigma
  _{Q_{A}}(p,t)\int_{s=0}^{S}b_{Q_{A}}(p,s,t)W(ds,dt)\text{ \ for \ } 0\leq p\leq
  S, \\
  Q_{A}(p,0) &=&Q_{A,0}(p) \text{ \ for \ } 0\leq p\leq S.  \notag
\end{eqnarray}%
\end{definition}

\begin{definition}\label{def:10}
For every real-valued $x$ the process $P_A(x,t)$ satisfies
\begin{equation}
  Q_{A}(P_A(x,t),t)=x.
\end{equation}%
\end{definition}

\begin{definition}\label{def:11}
The asymptotic liquidation proceeds of the large trader $%
L_{L}(\vartheta ,t)$ are defined by
\begin{equation*}
  L_{L}(\vartheta ,t)=\int_{0}^{\vartheta }P_A(x,t)dx.
\end{equation*}%
\end{definition}

\begin{assumption}\label{as:4}
Both $Q_{L}$ and $Q_{A}$ are twice differentiable in $p$. Only 
$Q_{A}$ is assumed to be continuous in $t$.
\end{assumption}

\begin{remark}\label{rem:6}
The (total) net demand curve satisfies
\begin{equation*}
  Q=Q_{L}+Q_{A}.
\end{equation*}
\end{remark}

\begin{assumption}\label{as:5}
For simplicity we assume%
\begin{equation*}
  Q(0,t)>0.
\end{equation*}
\end{assumption}

\begin{assumption}\label{as:6}
For each $p\geq \pi (t)$, the function $Q_{L}(p,t)$ is
continuous in time.
\end{assumption}

\begin{theorem}\label{thm:2}
Suppose in addition to the standing assumptions that

C1) for self-financing strategies involving only cross orders,
Jarrow's \cite{Jarr:94} discrete time conditions for absence of market
manipulation strategy hold,

C2) no arbitrage strategy involves uncross orders,

C3) the volatility $\sigma _{Q_{A}}(p,t)$ is bounded away from zero,
uniformly in $p$,

C4) there is no path such that $Q(S,t)\geq 0$ or $Q(0,t)\leq 0$.

\noindent Then

F1) there exists at least one martingale measure $\mathbb{Q}$ for $\int
L_L(\vartheta ,dt)$,

F2) there is no arbitrage strategy,

F3) the net demand curve $Q$ is continuous in $t$,

F4) the clearing price $\pi (t)$ is continuous,

F5) any such measure $\mathbb{Q}$ is also a martingale measure for $\pi (t)$.
\end{theorem}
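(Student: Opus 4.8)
The plan is to build the proof around the Bank–Baumann (\cite{BankBaum:04}) and Kallsen–Rheinl\"ander (\cite{KallRhein:09}) machinery, using the representation of the real wealth process given in Lemma 2.1 (equation (\ref{eq:bb04})). The strategy is to treat the large trader's activity as the driver, establish that the liquidation-proceeds functional admits a martingale measure, and then deduce no-arbitrage, continuity, and the clearing-price martingale property as consequences. I would organize the argument in the order F3 $\to$ F1 $\to$ F2 $\to$ F4 $\to$ F5, since continuity of $Q$ is the structural fact that makes the clearing price well-behaved and feeds all the later parts.

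\textbf{Step 1 (F3, continuity of $Q$).} I would first show that the total net demand $Q=Q_L+Q_A$ is continuous in $t$. By Assumption \ref{as:4} and Definition \ref{def:9}, $Q_A$ is continuous in $t$, so the only possible source of discontinuity is $Q_L$. Assumption \ref{as:6} already gives continuity of $Q_L(p,t)$ for $p\geq\pi(t)$; the content of the step is to argue that, under conditions C1 and C2, it is never advantageous for the large trader to submit an uncross order that would create a jump in $Q$ below the clearing price. Here I would invoke the same splitting-is-optimal argument alluded to before the statement of the theorem: because cross orders are matched instantaneously and behave exactly like the market orders of \cite{CetinJarrProt:04}, the optimal execution of a large position is to split it into an absolutely continuous flow, so the equilibrium $Q_L$ inherits continuity. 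Condition C2 rules out the pathological uncross strategies that the model does not otherwise control.

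\textbf{Step 2 (F4, continuity of $\pi$).} Given that $Q(\cdot,t)$ is continuous in $t$ and, by Assumption \ref{as:3}/\ref{as:4}, strictly decreasing and twice differentiable in $p$ with $\partial Q/\partial p<0$ on $(0,S)$, the clearing price is the unique zero of $p\mapsto Q(p,t)$ (Definition \ref{def:2}), which exists because C4 guarantees $Q(S,t)<0<Q(0,t)$ for every path so the demand curve always crosses zero strictly inside $(0,S)$. Continuity of $\pi$ then follows from an implicit-function / inverse-function argument: $\pi(t)=P(0,t)$ in the notation of Definition \ref{def:5}, and since $Q$ is jointly continuous in $(p,t)$ and strictly monotone in $p$ with a nonvanishing derivative, the zero varies continuously in $t$. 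I would note that C4 is exactly what prevents the clearing price from hitting the boundaries $0$ or $S$, so the inverse stays in the interior where $\partial Q/\partial p$ is bounded away from zero.

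\textbf{Step 3 (F1, F2, and the martingale property F5).} With continuity in hand, I would apply the Bank--Baumann framework: the wealth dynamics (\ref{eq:bb04}) show that for a self-financing strategy the proceeds are governed by $\int_0^t L_L(\theta(u_-),du)$ plus quadratic-variation and jump correction terms, the latter vanishing once $\theta$ is continuous. The existence of a martingale measure $\mathbb{Q}$ for $\int L_L(\vartheta,dt)$ (F1) is where conditions C1 and C3 do the work: C1 transplants Jarrow's discrete-time no-manipulation conditions to the cross-order strategies, C3 (volatility bounded away from zero) guarantees that the driving semimartingale is genuinely nondegenerate so a Girsanov-type change of measure exists and the market price of risk is well defined, and the results of \cite{KallRhein:09} then extend these discrete-time conditions to continuous time, yielding $\mathbb{Q}$. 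No-arbitrage (F2) follows from the existence of $\mathbb{Q}$ together with representation (\ref{eq:bb04}): under $\mathbb{Q}$ the proceeds process is a martingale, so $V^\theta(t)$ cannot satisfy the arbitrage inequalities of Definition \ref{def:8}. Finally, for F5 I would differentiate the clearing-price identity $Q(\pi(t),t)=0$ (Definition \ref{def:2}) using the Ito--Wentzell formula, as advertised in the introduction, to express $d\pi(t)$ in terms of the coefficients of (\ref{sde}) and the liquidation-proceeds dynamics; the drift of $\pi$ under $\mathbb{Q}$ then turns out to vanish precisely because $\mathbb{Q}$ kills the drift of $\int L_L(\vartheta,dt)$, giving that $\pi$ is a $\mathbb{Q}$-martingale. \textbf{The main obstacle} I anticipate is Step 1: rigorously ruling out jump-inducing uncross strategies of the large trader is genuinely delicate, and the paper appears to discharge much of this difficulty by \emph{assuming} it through C2 and Assumption \ref{as:6} rather than deriving it from a primitive optimization problem, so the honest version of this step is to verify carefully that the stated assumptions are exactly strong enough to close the gap while applying the \cite{KallRhein:09} continuous-time extension.
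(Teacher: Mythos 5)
Your overall architecture (Kallsen--Rheinl\"ander for the martingale measure, the Bank--Baumann wealth dynamics \eqref{eq:bb04} for no-arbitrage, optimality of continuous strategies for continuity) is the same as the paper's, but two of your steps deviate in ways that matter. The first is the ordering F3 $\to$ F4. The paper proves F4 \emph{before} F3, and for a reason: there is no direct argument that $Q_L$ is continuous below the clearing price. Instead the paper argues by contradiction through the large trader's position: if $\pi(t)-\pi(t_-)\geq\epsilon_2>0$, then strict monotonicity and time-continuity of $Q_A$ force a jump in holdings, $\theta(t_+)-\theta(t)=-Q_A(\pi(t),t_+)+Q_A(\pi(t_-),t)\geq\epsilon_4>0$; but the jump term in \eqref{eq:bb04} (Lemma 3.3 of \cite{BankBaum:04}) makes discontinuous strategies suboptimal, so $\theta(t_+)=\theta(t)$, a contradiction. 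Continuity of $\pi$, and then continuity of $Q_L$ evaluated at the clearing price (which is F3), both fall out of these same identities. Your Step 1 asserts the conclusion (``the optimal execution \dots is to split it into an absolutely continuous flow, so the equilibrium $Q_L$ inherits continuity'') without this mechanism; as you yourself flag, that step as written is not a proof, and putting F3 first removes the tool --- the jump-cost term acting on $\theta$, not on $Q_L$ directly --- that makes it rigorous. Your Step 2 (implicit-function argument) is fine once continuity of $Q$ is known, but it is the contradiction argument through $\theta$ that delivers both continuity facts at once.

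The second deviation is a genuine gap in F5. You propose to differentiate $Q(\pi(t),t)=0$ by Ito--Wentzell and claim that ``the drift of $\pi$ under $\mathbb{Q}$ vanishes precisely because $\mathbb{Q}$ kills the drift of $\int L_L(\vartheta,dt)$.'' That inference is exactly what needs proving, and the Ito--Wentzell computation you describe is the content of Theorem \ref{thm:3}, whose extra hypothesis --- that the market price of risk equations admit a unique solution --- is \emph{not} available in Theorem \ref{thm:2}. The paper's bridge is different and needs no such hypothesis: for an atomistic (infinitesimal) trader the liquidation proceeds collapse to $\int_0^t L(\theta(u),du)=\int_0^t\theta(u)\,d\pi(u)$, and since the left-hand side is a $\mathbb{Q}$-martingale for every strategy $\theta$, the integrator $\pi$ must itself be a $\mathbb{Q}$-martingale. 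Without that identity (or some substitute) your F5 does not close. Your F1/F2 step is on track in spirit, though the paper is more concrete: it verifies conditions (RF)(i)--(iv), (UB) and (UI) of \cite{KallRhein:09}, using Lemma \ref{lem:2} (the Ito--Wentzell SDE for the inverse process $P_A$) to show, via C3, that the quadratic variation of $L_L(\vartheta,\cdot)$ is strictly increasing and that the market price of risk $\mu_P/\sigma_P$ is bounded, before invoking their Theorem 3.5 and the supermartingale argument for no-arbitrage.
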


\noindent The full proof is presented in the Appendix. Here we show
only a summary of the proof.

\vspace{5mm} {\it Summary of the Proof: } We verify that the
\cite{KallRhein:09} conditions hold. Thus $\int L_L(\vartheta ,dt)$ is a
$\mathbb{Q}$-martingale, and no market manipulation exists that
involves cross orders.  We prove that tame strategies ($\theta $
continuous in $t$) are optimal for the large trader so that, first $Q$
is continuous in $t$., and, second, cross orders are traded only at
the clearing price $\pi (t)$.
\begin{flushright}
  $\qedsymbol$
\end{flushright}

\begin{remark}\label{rem:7}
Under certain conditions it is possible to extend our
model to more than one large trader on the market. Following the assumptions
of Theorem \ref{thm:2}, the net supply curve $Q$ in a market with one large trader is
indistinguishable from the net supply curve in a market with only atomistic
traders. It is thus plausible that, should a second large trader arrive in
the market, he would behave like the first large trader. However, there are
many different ways to justify this result, and we believe that this
discussion would be more appropriate in an economics journal than here.
\end{remark}

\section{Characterization of the Price Process in the Risk-Neutral
  Measure}\label{sec:char}

In this section we characterize the price process in either one of the
market models we specified earlier, since they turned out to be equivalent
for our purposes.

\begin{assumption}\label{as:7}
  To avoid repetition we assume in this section that there is no path
  such that $Q(S,t)\geq 0$ or $Q(0,t)\leq 0$.
\end{assumption}

\begin{definition}\label{def:12}
The market price of risk $\lambda $ is a function $%
[0,S]\times \mathbb{R}^{+}\times \Omega \mathbb{\rightarrow R}$. The market
price of risk process $\lambda (s,.,.)$ is an $\mathcal{F}_{t}$-adapted
semimartingale for every $s\in \lbrack 0,S].$
We define the $\mathbb{Q}$-measure as a measure such that the process $W^{%
\mathbb{Q}}$ is a Brownian sheet, where:%
\begin{equation}
  W^{\mathbb{Q}}(ds,dt)=W(ds,dt)+\lambda (s,t)dt.  \label{WQ}
\end{equation}
We can then define the clearing price process as
\begin{equation*}
  d\pi (t)=\sigma _{\pi }(t)\int_{s}b_{\pi }(s,t)W^{\mathbb{Q}}(ds,dt),
\end{equation*}
with
\begin{equation}
  \int_{s}b_{\pi }(s,t)^{2}ds=1.  \label{factob}
\end{equation}
\end{definition}

Since $Q(p,t)$ must be strictly decreasing in $p$, we find it
convenient to slightly modify the definition into:

\begin{equation*}
  Q(p,t)=Q(p,0)-\int_{0}^{p}q(y,t)dy,
\end{equation*}

where we define $Q(0,t)$ and $q(p,t)$ (for $0<p\leq S$) to be strictly
positive processes with
\begin{align}
  dQ(0,t) &=\mu _{Q}(0,t)dt-\sigma
  _{Q}(0,t)\int_{s}b_{q}(0,s,t)W(ds,dt)
  \quad Q(0,0)=Q_{0}(0),  \label{SDE1} \\
  dq(p,t) &=\mu _{q}(p,t)dt+\sigma
  _{q}(p,t)\int_{s}b_{q}(p,s,t)W(ds,dt) \quad
  q(p,0)=Q_{0}(p),  \label{SDE2} \\
  &\text{ \ for }0<p\leq S \nonumber\\
  q(0,t) &=0.  \nonumber
\end{align}%

Like before, the coefficients of the equations (\ref{SDE1}) and (\ref{SDE2})
are $\mathcal{F}_{t}$-adapted. We assume that the solution to (\ref{SDE1})
and (\ref{SDE2}) exists, is unique, and is uniformly bounded in $p$ and $t$
almost surely. Also, we require that
\begin{equation*}
  \int_{s=0}^{S}b_{q}^{2}(p,s,t)ds=1 \text{ \ for every \ } p \text{\
    and \ } t.
\end{equation*}

\begin{remark}\label{rem:8}
The process $q$ is a density of orders. By definition
\begin{eqnarray*}
q(p)dp &=&\text{quantity of shares available for purchase \ with limit price
in }[p,p+dp]\text{ } \\
&&\text{ + quantity of shares available for sale with limit price in }[p,p+dp].
\end{eqnarray*}%
\end{remark}

\begin{remark}\label{rem:9}
  Assuming that $Q(0,t)$ is twice-differentiable in $p$, we must make
  sure that $q(p,t)$ is differentiable in $p$, for the process
  $Q(p,t)$ to be twice-differentiable in $p$. This occurs if, for
  instance,
\begin{equation*}
dq(p,t)=\int_{s=0}^{p}(p-s)W(ds,dt).
\end{equation*}
\end{remark}

We now define the following processes:
\begin{eqnarray*}
C(\pi ,t) &=&-\sigma _{\pi }(t)\left( \frac{\partial }{\partial p}\left(
\sigma _{Q}(0,t)\int_{s}b_{q}(0,s,t)b_{\pi }(s,t)ds\right) +\sigma _{q}(\pi
(t),t)\int_{s}b_{q}(\pi ,s,t)b_{\pi }(s,t)ds\right),  \\
b(\pi ,t) &=&-\mu _{Q}(0,t)+\int_{0}^{\pi }\mu _{q}(p,t)dpdt+\frac{1}{2}%
\frac{\partial q}{\partial p}(\pi ,t)(\sigma _{\pi }(t))^{2}-C(\pi ,t), \\
\Sigma (\pi ,s,t) &=&\int_{0}^{\pi }\sigma _{q}(p,t)b_{q}(p,s,t)ds
\end{eqnarray*}

\begin{remark}\label{rem:10}
  If one is not interested in modeling the correlation between orders
  at a different limit price, one may think that a Brownian sheet is
  not necessary, in the same way that the Heath-Jarrow-Morton model is
  often implemented with only 2 factors\footnote{ Note however that
    Carmona and Tehranchi \cite{CarTer:03} show that the
    Heath-Jarrow-Morton model, if there are less factors than forward rates,
    may result in a special type of arbitrage.}. As we shall see
  however, the full complexity of a Brownian sheet is necessary for
  the market price of risk equations to have a solution.
\end{remark}

\begin{definition}\label{def:13}
The market price of risk equations are:
\begin{equation*}
  \int_{s=0}^{P}\Sigma (\pi ,s,t)\lambda (s,t)ds=b(\pi ,t),\text{ \ for \ }
  0\leq \pi \leq P.
\end{equation*}
\end{definition}

\begin{theorem}\label{thm:3}
Suppose that the previous assumptions hold true. In addition,
suppose that the market price of risk equations have a unique solution. Then
there is no arbitrage.
\end{theorem}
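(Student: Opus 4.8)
The plan is to show that a solution $\lambda$ of the market price of risk equations (Definition~\ref{def:13}) is exactly what makes the candidate measure $\mathbb{Q}$ of \eqref{WQ} an equivalent martingale measure for the liquidation proceeds, and then to read off absence of arbitrage from the supermartingale structure of the wealth process already recorded in \eqref{eq:bb04}. The engine of the argument is the It\^o--Wentzell formula applied to the defining identity \eqref{inv}, $Q(P(x,t),t)=x$; the case $x=0$ returns the clearing price, since $\pi(t)=P(0,t)$ by \eqref{pi}.

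First I would differentiate \eqref{inv} in $t$. Writing $Q(p,t)=Q(0,t)-\int_0^p q(y,t)\,dy$ with the dynamics \eqref{SDE1}--\eqref{SDE2}, the It\^o--Wentzell formula yields four contributions: the time-differential of the field $Q$ frozen at $p=P(x,t)$, the transport term $\partial_p Q\,dP=-q\,dP$, the second-order term $-\tfrac12\partial_p q\,d[P]^c$, and the cross-variation between $\partial_p Q$ and $P$, which is the source of the process $C$. Matching the martingale parts and using $\int_s b_q^2\,ds=1$ forces the diffusion density of $P(x,\cdot)$ to equal $-(\sigma_Q(0,t)b_q(0,s,t)+\Sigma(P(x,t),s,t))/q(P(x,t),t)$; specialising to $x=0$ identifies $\sigma_\pi b_\pi$ and gives the asserted inverse proportionality of the clearing-price volatility to the order-flow density $q$ evaluated at the clearing price.

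Next I would gather the finite-variation parts and rewrite the drift of $P(x,\cdot)$ under $\mathbb{Q}$ through \eqref{WQ}, i.e.\ $W(ds,dt)=W^{\mathbb{Q}}(ds,dt)-\lambda(s,t)\,dt$. Requiring this $\mathbb{Q}$-drift to vanish at every price level collapses, once the diffusion density just found and the abbreviations $b$ and $\Sigma$ are inserted, to exactly the market price of risk equations of Definition~\ref{def:13}. By hypothesis this first-kind system admits a unique solution $\lambda$, so the associated $\mathbb{Q}$ makes every $P(x,\cdot)$ a $\mathbb{Q}$-local martingale; the uniform boundedness of $Q$ and of the coefficients (Definition~\ref{def:1} and the standing integrability assumptions) then upgrades these to true $\mathbb{Q}$-martingales, so that $\int L(\vartheta,dt)$ is a $\mathbb{Q}$-martingale and, at $x=0$, so is $\pi(t)$. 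This reproduces conclusions F1 and F4/F5 of Theorems~\ref{thm:1}--\ref{thm:2}.

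Finally, with the proceeds a $\mathbb{Q}$-martingale I would argue as in the proof of Theorem~\ref{thm:2}: the decomposition \eqref{eq:bb04} writes $V^\theta$ as a $\mathbb{Q}$-local martingale plus finite-variation corrections whose sign (as established in \cite{BankBaum:04}, from the concavity of $L$, equivalently the monotonicity of $P$ in Remark~\ref{rem:2}) makes $V^\theta$ a $\mathbb{Q}$-supermartingale; since $V^\theta(0_-)=0$ and $\mathbb{Q}\sim\mathbb{P}$, it cannot meet the arbitrage requirements of Definition~\ref{def:8}. I expect the decisive difficulty to be this central step: rigorously justifying the It\^o--Wentzell formula for the Brownian sheet at the random, time-varying evaluation point $P(x,t)$ (which requires the twice-differentiability in $p$ of Assumptions~\ref{as:3}/\ref{as:4} together with the semimartingale-in-$t$ property) and correctly bookkeeping the cross-variation term $C$. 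A secondary technical point is verifying that the Girsanov density generated by $\lambda$ is a genuine martingale, so that $\mathbb{Q}\sim\mathbb{P}$; this is precisely where the full infinite-factor Brownian sheet flagged in Remark~\ref{rem:10} is needed, as it is what permits an integrable $\lambda$ solving the first-kind equation to exist.
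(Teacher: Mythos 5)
Your derivation of the market price of risk equations is, in substance, the paper's own: the paper applies the It\^o--Wentzell formula to the clearing identity $Q(\pi(t),t)=0$ and matches volatility and (after substituting \eqref{WQ}) drift terms, whereas you apply it to $Q(P(x,t),t)=x$ for every $x$ and specialize to $x=0$; since $q>0$, requiring the $\mathbb{Q}$-drift of $P(x,\cdot)$ to vanish for every $x$ is the same as requiring the drift condition at every price level, so the two computations produce the same system (your retention of the $\sigma_Q(0,t)b_q(0,s,t)$ diffusion term coming from \eqref{SDE1} is, if anything, more careful bookkeeping than \eqref{ITOWE}--\eqref{vovol}). The divergence is in the endgame, and that is where your proposal has a genuine gap.

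You construct $\mathbb{Q}$ by exponentiating the solution $\lambda$ (Girsanov) and then run the supermartingale argument on \eqref{eq:bb04}. For this to work, the stochastic exponential of $-\int\!\!\int\lambda(s,t)\,W(ds,dt)$ must be a uniformly integrable martingale; otherwise $\mathbb{Q}$ is not a probability measure equivalent to $\mathbb{P}$ and nothing downstream survives. You call this a ``secondary technical point,'' but it is the crux: unique solvability of the first-kind system in Definition~\ref{def:13} produces a candidate $\lambda$, yet no standing assumption yields Novikov/Kazamaki-type exponential integrability of $\int_0^T\!\!\int_0^S\lambda^2(s,t)\,ds\,dt$, and with infinite-dimensional driving noise this can genuinely fail; Remark~\ref{rem:10} concerns solvability of the linear system, not integrability of its solution. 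The paper's proof is structured precisely so that this issue never arises: it never exponentiates $\lambda$. Existence of at least one equivalent martingale measure for $\int L(\vartheta,dt)$ is imported from conclusion F1 of Theorems~\ref{thm:1} and~\ref{thm:2} (proved in the Appendix by verifying the conditions (RF), (UB), (UI) of \cite{KallRhein:09}); by F4/F5 every such measure is also a martingale measure for $\pi$; the unique solvability of the market price of risk equations is used only to conclude that there is at most one martingale measure for $\pi$, so the set of candidate measures collapses to $\{\mathbb{Q}\}$; and no-arbitrage is then read off from F2 of those theorems. Your supermartingale step is exactly the quoted Kallsen--Rheinl\"ander argument, but it only becomes available once an equivalent $\mathbb{Q}$ is known to exist. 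To repair your proof, either add hypotheses guaranteeing that the Girsanov density is a true martingale, or replace the direct construction by the paper's identification argument.
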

 
\begin{proof}
  A market clears if $Q(p(t),t)=0$, or, equivalently, if
  $dQ(p(t),t)=0$. We use the Ito-Wentzell formula to compute
  $dQ(p(t),t)$ and set $dQ(p(t),t)=0$.

\begin{align}
  \mu _{Q}(0,t)dt&-\int_{0+}^{\pi (t)}\mu _{q}(p,t)dpdt-\int_{0}^{\pi
    (t)}\sigma _{q}(p,t)\int_{s}b_{q}(p,s,t)W(ds,dt)dp  \label{ITOWE} \\
  &-q(\pi (t),t)\sigma _{\pi }(t)\int_{s}b_{\pi
  }(s,t)W^{^{\mathbb{Q}}}(ds,dt)-%
  \frac{1}{2}\frac{\partial q}{\partial p}(\pi (t),t)(\sigma _{\pi
  }(t))^{2}dt+C(\pi (t),t)dt=0. \notag
\end{align}

\noindent We equate the volatility terms to zero above and find
\begin{equation}
  \sigma _{\pi }(t)b_{\pi }(s,t)=-\frac{\int_{0}^{\pi (t)}\sigma
    _{q}(p,t)b_{q}(p,s,t)dp}{q(\pi (t),t)}.  \label{vovol}
\end{equation}

\noindent We substitute (\ref{WQ})\ in (\ref{ITOWE})\ and equate the drift terms to
zero. This results in
\begin{align*}
  \int_{s}\int_{0}^{\pi (t)}&\sigma _{q}(p,t)b_{q}(p,s,t)dp\lambda (s,t)ds= \\
  &\quad-\mu _{Q}(0,t)+\int_{0+}^{\pi (t)}\mu
  _{q}(p,t)dpdt+\frac{1}{2}\frac{
\partial q}{\partial p}(\pi (t),t)(\sigma _{\pi }(t))^{2}-C(\pi (t),t).
\end{align*}

\noindent Since the above must hold for any value of $\pi (t)$, then
the market price equations must be satisfied. Thus there exists a
unique measure $\mathbb{Q}$ such that $\pi $ is a martingale. By
theorems \ref{thm:1} and \ref{thm:2}, there exists a non-empty set
$\mathcal{Q}$ of martingale measures for $\int L(\vartheta
,dt)$. Besides, any measure $\mathbb{\tilde{Q}\in }$ $\mathcal{Q}$
must be a martingale measure for $\pi $. Uniqueness of $\mathbb{Q}$
ensures that $\mathcal{Q=\{}\mathbb{Q\}}$, thus $\mathbb{Q}$ is a
martingale measure for $ \int L(\vartheta ,dt)$. Therefore Theorems
\ref{thm:1} and \ref{thm:2} imply no arbitrage.
\end{proof}

\begin{remark}\label{rem:11}
In a numerical implementation, the market price of risk
equations will be a set of $S$ linear equations with $S$ linear unknowns.
Generically, like other market price of risk equations in finance, these
will admit a unique solution. This was the case in all our simulations.
\end{remark}

\begin{remark}\label{rem:12}
Integrating (\ref{vovol}) and using (\ref{factob}), we see
that
\begin{equation}
  \sigma _{\pi }(t)=\frac{(\int_{0}^{S}(\int_{0}^{\pi (t)}\sigma
    _{q}(p,t)b_{q}(p,s,t))^{2}dpds)^{1/2}}{q(\pi (t),t)}.  \label{tagad}
\end{equation}

\noindent The denominator of (\ref{tagad})\ shows that more orders at the clearing
price decrease volatility. This is to be expected. The effect of the
numerator is harder to analyze, and shows that the volatility of the whole
net demand curve affects the volatility of the clearing price.
\end{remark}

Before we move to the implementation of our model, we perform an empirical
analysis of the market, which will guide us in specifying a parametric model.

\section{Empirical Analysis}\label{sec:empirical}

\subsection{Implementation}

To the best of our knowledge, our methodology to model prices is quite
original. In the same way that the HJM\ methodology opened the way to
the development of several parametric models, we hope that this paper
will result in an effort to develop models for the demand curve that
are appropriate for risk management. This section only scratches the
surface of that effort. Alternatively, this is what practitioners would
call a "proof of concept", showing that the model can be implemented
without too much effort. We do not make any claims about the power of
the model we present hereafter, which is perhaps the simplest
no-arbitrage model imaginable in our framework.

Compared to the Heath-Jarrow-Morton model for the forward rates, it is
necessary in our model to know the drift of $Q(p,t)$ in the physical
measure in order to simulate the price in the risk-neutral measure. In
practice, there are two different methods to determine this drift:\
historical estimation or market implied. In the historical estimation
method, one calibrates the physical drift $\mu _{Q}(0,t)$ in the model
to market observables, and then solves for the market price of
risk. In the market implied method, one calibrates the model to, say,
a smile curve of option prices, like for other stochastic volatility
models. This paper fell short of proving a second fundamental theorem
of asset pricing, which would justify the implied market method. This
will be a goal a subsequent paper.

In our implementation, we first use the historical estimation method to
estimate the drift of $Q$ in the physical measure, and then
calculate options prices by simulation, in essence assuming that the second
fundamental theorem works.

\subsection{A\ Model with Relative Prices}

As explained in the introduction, and shown later, a model will be
more robust if it assumes a direct dependence between the net demand
curve $Q(p,t)$ and the relative prices $p-\pi (t)$. This takes into
account the fact that the investors observe the market in real-time.
We are then obligated to introduce $Q(p,\pi ,t)$ as the net demand
curve at price $p$ when the clearing price is $\pi $. We have then as
before
\begin{eqnarray}
  Q(p,\pi (t),t) &=&Q(0,\pi (t),t)-\int_{0}^{\pi (t)}q(p,\pi (t),t)dp,
  \label{BIGREL} \\
  dq(p,\pi (t),t) &=&\mu _{q}(p,\pi (t),t)dt+\sigma _{q}(p,\pi
  (t),t)\int_{s}b_{q}(p,\pi (t),s,t)W(ds,dt),  \notag \\
  dQ(0,\pi (t),t) &=&\mu _{Q}(0,\pi (t),t)dt-\sigma _{q}(0,\pi
  (t),t)\int_{s}b_{q}(0,\pi (t),s,t)W(ds,dt).  \notag
\end{eqnarray}

\noindent Let
\begin{equation*}
  F(\pi ,t)=Q(0,\pi ,t)-\int\limits_{0}^{\pi }q(p,\pi ,t)dp.
\end{equation*}

\noindent Then
\begin{equation*}
  F_{\pi }(\pi ,t)=\frac{\partial Q(p,\pi ,t)}{\partial \pi }-q(\pi ,\pi
  ,t)-\int\limits_{0}^{\pi }\frac{\partial }{\partial \pi }q(p,\pi ,t)dp
\end{equation*}

\noindent and
\begin{equation*}
  F_{\pi \pi }(\pi ,t)=\frac{\partial ^{2}Q(0,\pi ,t)}{\partial \pi ^{2}}%
  -q_{p}(\pi ,\pi ,t)-q_{\pi }(\pi ,\pi ,t)-\frac{\partial }{\partial \pi }%
  q(\pi ,\pi ,t)-\int\limits_{0}^{\pi }\frac{\partial ^{2}}{\partial \pi ^{2}}%
  q(p,\pi ,t)dp.
\end{equation*}

\noindent We define
\begin{equation*}
H(s,\pi ,t)=-\int\limits_{p=0}^{\pi }\sigma _{q}(p,\pi ,t)b_{q}(p,\pi ,s,t)dp.
\end{equation*}

\noindent Hence
\begin{equation*}
\frac{\partial }{\partial \pi }H(s,\pi ,t)=-\int\limits_{p=0}^{\pi }\frac{%
\partial \sigma _{q}}{\partial \pi }[\sigma _{q}(p,\pi ,t)b_{q}(p,\pi
(t),s,t)]dp-\sigma _{q}(\pi ,\pi ,t)b_{q}(\pi ,\pi ,s,t).
\end{equation*}

\noindent Therefore the Ito-Wentzell's formula takes the form
\begin{eqnarray*}
dQ(\pi (t),t) &=&\mu _{Q}(0,\pi (t),t)-\int\limits_{p=0}^{\pi (t)}\mu
_{q}(p,\pi (t),t)dt-\sigma _{q}(p,\pi (t),t)\left( \int_{s}b_{q}(p,\pi
(t),s,t)W(ds,dt)\right) dp \\
&&+F_{\pi }(\pi (t),t)d\pi (t)+\frac{1}{2}F_{\pi \pi }(\pi (t),t)(d\pi
)^{2}+C(\pi (t),t)dt,
\end{eqnarray*}%

\noindent where
\begin{equation*}
C(\pi (t),t)=\sigma _{\pi }(t)\int \frac{\partial }{\partial \pi }H(s,\pi
(t),t)ds.
\end{equation*}

\noindent Market clears if $dQ(\pi (t),t)=0$. Equating the
volatilities results in
\begin{equation}
\sigma _{\pi }(t)b_{\pi }(s,t)=\frac{\int_{0}^{\pi (t)}\sigma _{q}(p,\pi
(t),t)b_{q}(p,\pi (t),s,t)dp}{F_{\pi }(\pi (t),t)}.  \notag
\end{equation}

\noindent Apart from the denominator, this is the same equation as (\ref{vovol}). We now
define
\begin{equation}
  b(\pi ,t)=\int_{0}^{\pi }\mu _{q}(p,\pi ,t)dp-\mu _{Q}(0,\pi ,t)-\frac{1}{2}%
  F_{\pi \pi }(\pi (t),t)\sigma _{\pi }^{2}(t)-C(\pi (t),t).
\label{G_withprice}
\end{equation}%

\noindent Finally, the market price of risk equations can be expressed as
\begin{equation}
  \int_{s}\left( \int_{0}^{\pi }\sigma _{q}(p,\pi ,t)b_{q}(p,\pi
    ,s,t)dp\right) \lambda (s,t)ds=b(\pi ,t).  \label{driftwithprice}
\end{equation}

\subsection{A Parametric Implementation with Relative Prices}

We discretize our model in 3 dimensions: time $t$, limit price $p$ and
factor $s$. For the market price of risk equations to have a unique
solution the number of price buckets where orders are assigned should
be equal to $S$, namely the number of factors. We let $\Delta p$ be
the size of a price bucket.

We define the relative price $k$ as
\begin{equation*}
  k\equiv p-\pi,
\end{equation*}

\noindent and the relative net demand $\tilde{Q}(k,\pi ,t)$ as
\begin{equation*}
  \tilde{Q}(k,\pi ,t)=Q(p,\pi ,t).
\end{equation*}

We simplify the model above (as expressed in \eqref{BIGREL}), and
assume that the order flow depends only on the relative price $k$,
i.e., $\tilde{Q}(k,\pi ,t)=$ $\tilde{Q}(k,t)$. Now assume without loss
of generality that $S$ is even, and define $K=S/2$. We model the
relative net demand curve as
\begin{equation*}
  \tilde{Q}(k,t)=\tilde{Q}(-K,t)+\sum_{l=-K+1}^{k}\tilde{q}(l,t).
\end{equation*}

In our model, the logarithm of the order flow quantities follow
Ornstein-Uhlenbeck processes. This ensures the stationarity as well as the
positivity of the order flow quantities. In other terms
\begin{align}
  d\log \tilde{Q}(0,t) &=-a_{Q}(0)(\log \tilde{Q}(0,t)-\log \hat{Q}%
  (0))dt+\sigma _{Q}^{rel}(0)\sum_{j=-K+1}^{K}b_{q}(k,j)\sqrt{\Delta
    p} dW_{j}(t)  \label{logbigQ} \\
  d\log \tilde{q}(k,t) &=-a_{q}(k)(\log \tilde{q}(k,t)-\log \hat{q}%
  (k))dt+\sigma _{q}^{rel}(k)\sum_{j=-K+1}^{K}b_{q}(k,j)\sqrt{\Delta
    p}
  dW_{j}(t)\text{, with }\label{logq}\\
  &k=-K+1..K \nonumber
\end{align}

It is thus necessary to estimate the parameters $a_q(k)\geq 0$,
$a_Q(0)\geq 0$, $\sigma _{q}^{rel}(k)$ and $b_{q}(k,j)$, as well as
the initial values. We notice that our model is not twice
differentiable in $p$, however, for our parameter values, it does not
behave significantly differently from a smoothed version of that
model.

\subsubsection{Results}

\paragraph{Data:}
We collected high frequency data from NYSE ArcaBook \cite{arca:11} for General
Electric Company (GE) on April 1$^{st}$, 2011. GE limit order is
characterized by six intrinsic quantities: (1) message type:
e.g. ``A": add new order,``M": modify order, ``D": delete order; (2)
trading type: e.g. ``B": buy limit order or ``S": sell limit order;
(3) time: recorded when a new event occurs, e.g. add new order, modify
order or delete order; (4) ID: a unique identifier of each limit
order; (5) price in dollars; and (6) size in
number of shares.

We selected data during the trading time from 9:30 AM to 4:00 PM
EST. Before doing any analysis, we removed obvious errors in the data,
e.g. abnormal prices of USD 122 or USD 0.01. We managed to keep track
of approximately 90\% of the original limit orders.
For the cancellation of limit orders, we assumed that if the amount of
time was less than or equal 2 minutes when the ``delete'' message
occurred after the ``modify''
message of the same order, this order was cancelled.\\

Let $p_{\text{min}}$ be the minimum price of USD 20.00 and
$p_{\text{max}}$ be the maximum price of USD 20.62 during the trading
day. We partitioned the price space into 5 cents bins
\begin{eqnarray*}
	p_{\text{min}} = 20.00 &=& p_{-K} < p_{-K+1} < \dots < p_{K-1}
        < p_K = p_{\text{max}} = 20.62\\
	p_{K+1} - p_K &=& \Delta p ~(5 ~\text{cent}).
\end{eqnarray*}

Likewise, we partitioned the time into:
\begin{equation*}
	t_{i+1} - t_i = \Delta t ~(1 ~\text{minute})
\end{equation*}

\paragraph{Clearing Prices $\pi(t)$:}
By dividing the time period from 9:30 AM to 4:00 PM EST into one-minute time intervals, we obtained 390 one-minute time intervals (6.5 trading hours * 60 minutes).

\begin{table}[htdp]
\begin{center}
\caption{Statistical Summary for $\pi(t)$}
\begin{tabular}{llcll}
\hline 
\textbf{Summary} & \textbf{Output} & &\textbf{Summary} & \textbf{Output}\\
\hline \hline
nobs & 390 & & SE Mean & 0.004174\\
\hline
NAs & 0 & & LCL Mean & 20.3294\\
\hline
Minimum & 20.05 & &UCL Mean & 20.3458\\
\hline
Maximum & 20.61 & &Variance & 0.006795\\
\hline
1. Quartile & 20.29 & &Stdev & 0.082435\\
\hline
3. Quartile & 20.39 & &Skewness & -0.289841\\
\hline
Mean & 20.338 & &Kurtosis & 0.277282\\
\hline
Median & 20.340 & &Sum & 7931.68\\
\hline  \hline
\end{tabular}
\end{center} 
\end{table}%

\subparagraph{Jarque-Bera Test:} We implemented the Jarque-Bera test
to check the null hypothesis H$_0$ that the clearing prices $\pi(t)$
are normally distributed (while H$_1$ is the hypothesis that $\pi(t)$
are not normally distributed). The Jarque-Bera test used both skewness
and kurtosis simultaneously to check the normality of the selected
data. The Jarque-Bera test statistic was calculated as
\begin{equation*}
	\text{JB} = (S^*)^2 + (K^*)^2,
\end{equation*}
where
\begin{equation*}
	S^{*} = \sqrt{\frac{T}{6}} \ \widehat{S} \left( \pi(t) \right)
        \sim N(0,1) \qquad \text{ and } \qquad
	K^* = \sqrt{\frac{T}{24}} \left( \widehat{K} \left( \pi(t) \right) - 3 \right) \sim N(0,1).
\end{equation*}

The Jarque-Bera test yielded the p-value of 3.161\%, which is below
the default significance level of 5\%. Thus, we rejected the null
hypothesis $H_0$ that the clearing prices $\pi(t)$ are normally
distributed. However, with the p-value of 3.161\%, the distribution of
$\pi(t)$ is rather close to a normal distribution, which is also shown
in Figure 1. 

\begin{figure}[htbp]
  \begin{center}
  \includegraphics[width=0.45\textwidth]{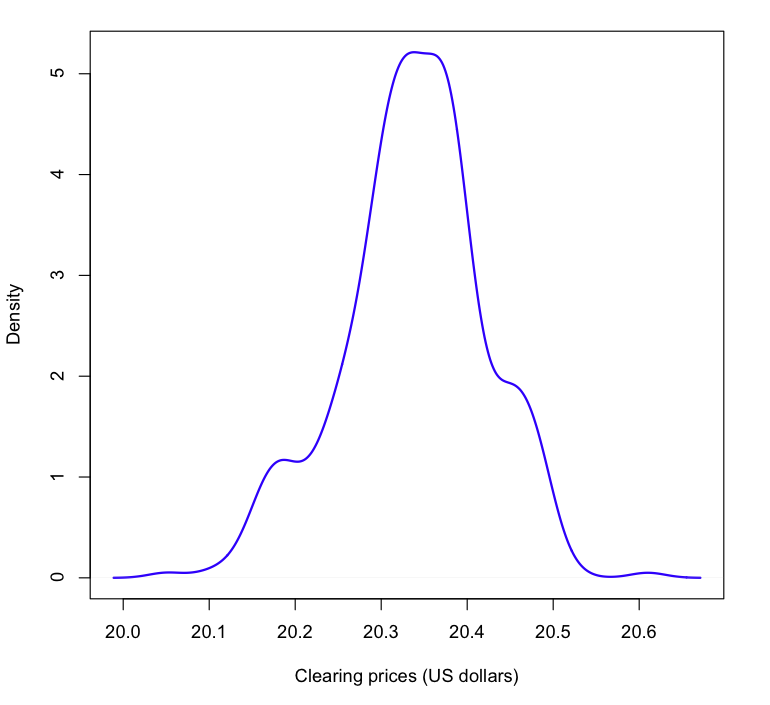}
  \caption{The density of clearing prices $\pi(t)$}
  \label{densityQQ_adjPi}
 \end{center}
\end{figure}

\subparagraph{Evolution of $\pi(t)$:}
From the result of linear regression, we obtained the constant value of the drift of clearing prices $\pi(t)$. Thererfore, we have
\begin{eqnarray*}
	d \pi(t) &=& cdt + \sigma_{\pi}(t) \int_s b_{\pi}(s,t) W(ds,dt),\\
	&& \qquad \qquad \qquad \qquad \text{where: } c = 0.000374764.
\end{eqnarray*}

\begin{figure}[!h]
\begin{center}
	\includegraphics[width=0.4\textwidth]{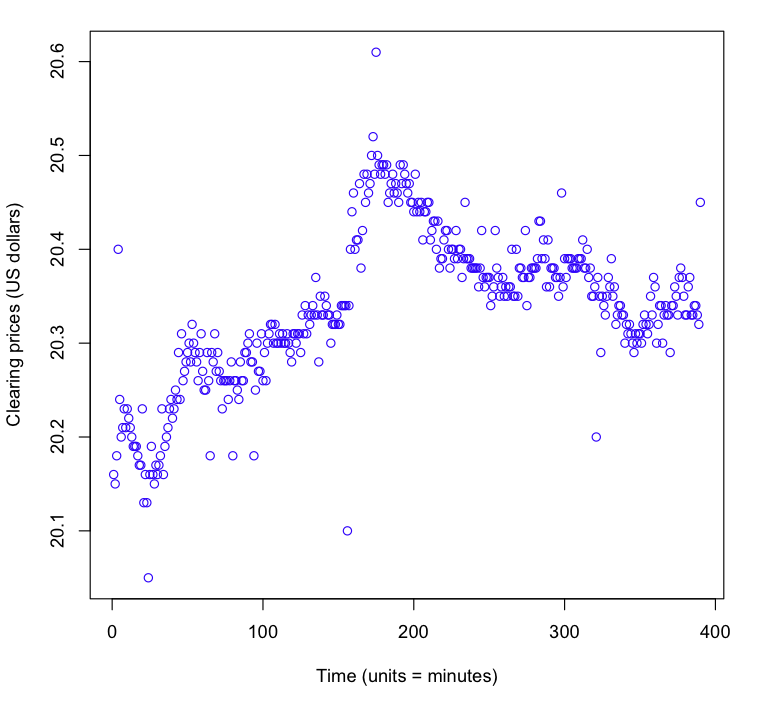}
	\includegraphics[width=0.4\textwidth]{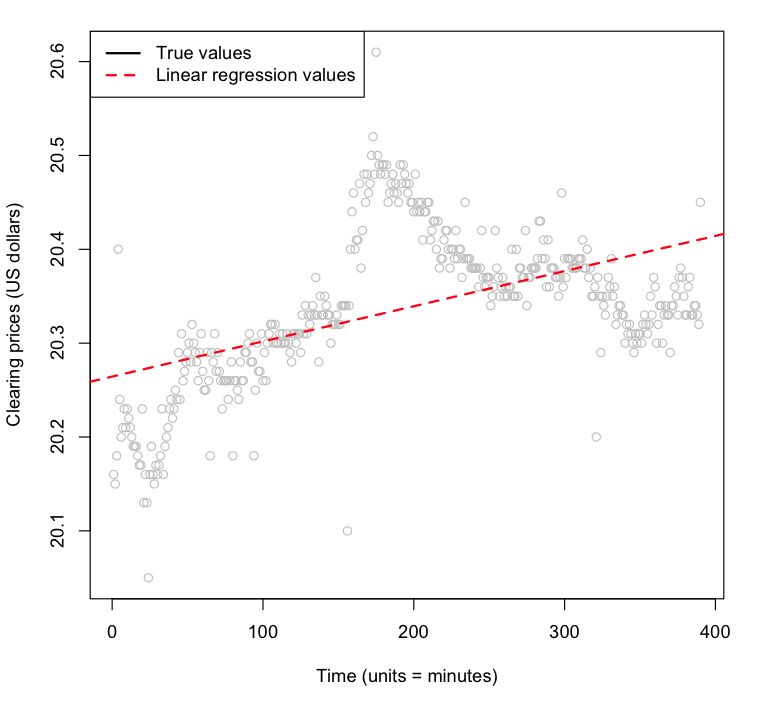}
\caption{GE clearing prices $\pi(t)$ on April 1$^{st}$, 2011}
\label{reg_true_plot}
\end{center}
\end{figure}

\paragraph{Buy Limit Orders:} Let us recall the definition of $\tilde{Q}$
\begin{eqnarray*}
	\log \tilde{Q}(-K, i \Delta t) &=& \log \left( \sum \text{number of buy orders} * \text{buy order quantity arriving before} \ i \Delta t \right),\\
	&& \qquad \qquad \text{where } \Delta t = 1\! \text{ minute}.
\end{eqnarray*}

We tested the data of $\log \tilde{Q}(i \Delta t)$ for one-minute time
interval with the autocorrelation and partial autocorrelation
functions.  The ACF and PACF figures showed (see Figure 3) that  the data of $\log \tilde{Q}(i \Delta t)$ for one-minute time interval were fitted in the autoregressive AR(1) model.
\begin{figure}[htbp]
\begin{center}
  \includegraphics[width=0.45\textwidth]{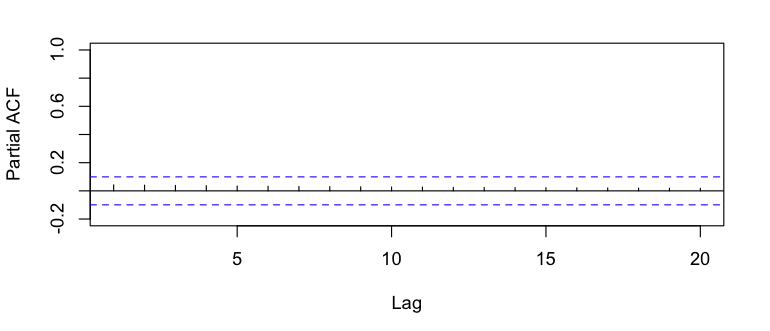}
  \includegraphics[width=0.45\textwidth]{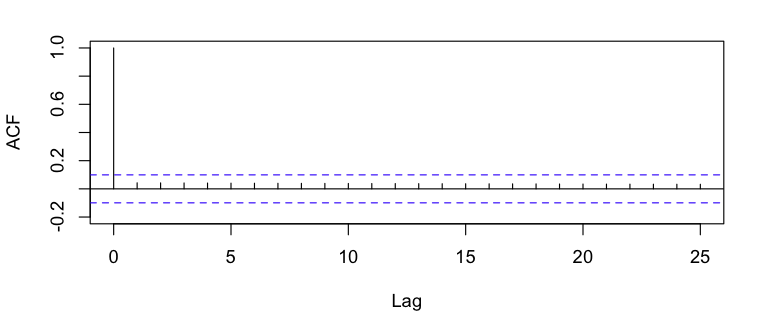}
\label{acf_logQbar_1min}
\caption{ACF and PACF of $\log \tilde{Q}(i \Delta t)$ for one-minute time interval.}
\end{center}
\end{figure}

\paragraph{Results:}
We conducted the estimation of our model (\ref{logbigQ}), (\ref{logq})\
over the data set described above. The opening clearing price on that day was
\begin{equation*}
	\pi (0)=20.16.
\end{equation*}%
Using $K=7$ and $\Delta p=0.05$, the speed of mean-reversion was obtained from the autoregressive AR(1) model. For the relative net demand curve, we observed the
following initial value and relative hourly volatility:

\begin{eqnarray*}
\tilde{Q}(-K,0) &=&1.02705\times 10^{11}, \\
\sigma _{Q}^{rel}(-K) &=&0.01976.
\end{eqnarray*}

For the relative net demand curve density the values are reported in
the following table
\begin{table}[h!]\label{tbl:2}
\begin{center}
\begin{tabular}{|c|c|c|c|}
\hline
$k$ & $q(k,0)$ [in 10$^{11}$] & $\sigma _{q}^{rel}(k)$ -- hourly & $a(k)$ \\ 
\hline\hline
-6 & 0.95314 & 0.04883 & 0.11903 \\ \hline
-5 & 1.41994 & 0.04655 & 0.29142 \\ \hline
-4 & 2.35893 & 0.01706 & 0.25250 \\ \hline
-3 & 0.82541 & 0.04423 & 0.36708 \\ \hline
-2 & 0.14050 & 0.04877 & 0.36752 \\ \hline
-1 & 4.13487 & 0.03744 & 0.29380 \\ \hline
0 & 0.21397 & 0.00461 & 0.21991 \\ \hline
1 & 9.95599 & 0.00379 & 0.25219 \\ \hline
2 & 4.61052 & 0.03496 & 0.36316 \\ \hline
3 & 3.51037 & 0.00653 & 0.15248 \\ \hline
4 & 2.42507 & 0.01224 & 0.19830 \\ \hline
5 & 0.14219 & 0.00036 & 0.34405 \\ \hline
6 & 2.70257 & 0.00969 & 0.13387 \\ \hline
\end{tabular}
\end{center}
\caption{Statistics for the relative net demand curve density}
\end{table}

We then simulated the lognormal model (\ref{logbigQ}), (\ref{logq})\ using $%
N=100$ scenarios and approximated the value of a call option:

\begin{equation*}
C(\psi)=\frac{1}{N}\sum_{\omega =1}^{N}\max (\pi (T,\omega )-\psi,0),
\end{equation*}
where $\psi$ is a strike price for an expiration $T$=0.02, that is,
roughly equal to one week. We then calculated the implied volatility
of the call for each strike price, that is, the value $\sigma
^{imp}($Strike$)$ such that the Black-Scholes value of the call option
equals $C(K)$, and this for each $K$. The resulting function $\sigma
^{imp}($Strike$)$ as a function of the strike price ( the smile curve)
is reported in Figure \ref{fig:3} below.

\begin{figure}[!h]
\begin{center}
  \includegraphics[width=0.5\textwidth]{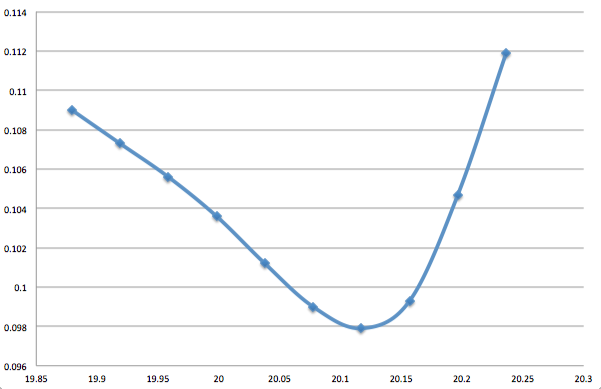}
  \caption{Implied volatility as a function of the strike price. Call
    option with 1 week expiration, $\pi(0)=20.16$, and zero interest rate.}
  \label{fig:3}
\end{center}
\end{figure}

As expected, the smile is fairly pronounced, which is an indicator that our
model engenders fat tails in the risk-neutral distribution of $\pi(T)$.


\section{Appendix} 

\begin{lemma}\label{lem:2}
Let
\begin{eqnarray*}
  \mu _{P}(x,t) &=&-\frac{\mu _{Q_{A}}(P_A(x,t),t)+\frac{1}{2}\frac{\partial
      ^{2}Q_{A}}{\partial p^{2}}(P_A(x,t),t)\sigma _{P}^{2}(x,t)+\frac{\partial
      \sigma _{Q_{A}}}{\partial p}(P_A(x,t),t)\sigma _{p}(t)}{\frac{\partial Q}{%
      \partial p}(P_A(x,t),t)} \\
  \sigma _{P}(x,t) &=&\frac{\sigma _{Q_{A}}(P_A(x,t),t)}{\frac{\partial Q_{A}}{%
      \partial p}(P_A(x,t),t)} \\
  b_{P}(x,s,t) &=&b_{Q_{A}}(P_A(x,t),t),s,t)
\end{eqnarray*}

Then $P_A$ is a semimartingale and satisfies
\begin{equation}
  dP_A(x,t)=\mu _{P}(x,t)dt+\sigma_{P}(x,t)\int\limits_{0}^{S}b_{P}(x,s,t)W(ds,dt)  \label{equforP}
\end{equation}%
\end{lemma}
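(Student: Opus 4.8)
The plan is to apply the Itô-Wentzell formula to the random field $Q_A(\cdot,t)$ evaluated along the (candidate) semimartingale $p=P_A(x,t)$, exploiting the fact that, for each fixed $x$, the defining relation of Definition \ref{def:10}, namely $Q_A(P_A(x,t),t)=x$, makes $t\mapsto Q_A(P_A(x,t),t)$ constant, so that its stochastic differential vanishes identically.

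First I would address existence and the semimartingale property of $P_A$. Since Assumption \ref{as:4} gives that $Q_A$ is twice differentiable in $p$ and it is strictly decreasing (so $\partial Q_A/\partial p\neq 0$), the spatial inverse $P_A(x,t)$ is well defined and as smooth in $x$ as $Q_A$ is in $p$, with adaptedness inherited from that of $Q_A$. The cleanest route is constructive: I would posit the decomposition
\begin{equation*}
dP_A(x,t)=\mu_P(x,t)\,dt+\sigma_P(x,t)\int_0^S b_P(x,s,t)\,W(ds,dt),
\end{equation*}
substitute it into the Itô-Wentzell expansion of $dQ_A(P_A(x,t),t)$, and solve for the coefficients so that the expansion equals zero; the resulting coefficients being well defined (thanks to $\partial Q_A/\partial p<0$) then \emph{verifies} that this $P_A$ is a semimartingale solving the inversion relation.

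Next I would write out the Itô-Wentzell expansion. Along $p=P_A(x,t)$ it produces four contributions: the intrinsic drift $\mu_{Q_A}\,dt$ and martingale $\sigma_{Q_A}\int b_{Q_A}\,W(ds,dt)$ of the field; the ordinary composition terms $\tfrac{\partial Q_A}{\partial p}\,dP_A+\tfrac12\tfrac{\partial^2 Q_A}{\partial p^2}\,d[P_A,P_A]^c$; and the characteristic cross-variation term between the spatial derivative of the martingale part of the field and $dP_A$. Setting the martingale part of the whole expression to zero and matching integrands against each $s$ gives the relation $\sigma_{Q_A}b_{Q_A}+\tfrac{\partial Q_A}{\partial p}\sigma_P b_P=0$, which identifies $b_P(x,s,t)$ with $b_{Q_A}(P_A(x,t),s,t)$ (automatically preserving the normalization $\int_0^S b_P^2\,ds=1$) and determines $\sigma_P$ as the ratio $\sigma_{Q_A}/(\partial Q_A/\partial p)$ evaluated at $P_A$.

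Finally I would collect the finite-variation terms. Using $d[P_A,P_A]^c=\sigma_P^2\,dt$ (again from $\int_0^S b_P^2\,ds=1$), the only delicate piece is the cross-variation correction, which equals $\sigma_P\int_0^S \tfrac{\partial}{\partial p}\big(\sigma_{Q_A}b_{Q_A}\big)\,b_{Q_A}\,ds\,dt$; differentiating the constraint $\int_0^S b_{Q_A}^2\,ds=1$ in $p$ shows $\int_0^S (\partial_p b_{Q_A})\,b_{Q_A}\,ds=0$, so this term collapses to $\sigma_P\,\tfrac{\partial \sigma_{Q_A}}{\partial p}\,dt$. Setting the drift to zero and solving for $\mu_P$ then yields exactly the stated expression. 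I expect the main obstacle to be the rigorous treatment of that cross-variation term for a Brownian-sheet-driven field --- keeping track of the $s$-integration and confirming that the normalization reduces it to $\partial_p\sigma_{Q_A}\cdot\sigma_P$ --- together with justifying that the Itô-Wentzell formula applies under the stated boundedness and regularity hypotheses.
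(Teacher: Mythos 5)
Your proposal follows essentially the same route as the paper: posit the semimartingale decomposition \eqref{equforP}, apply the Ito--Wentzell formula to the identity $Q_A(P_A(x,t),t)=x$ so that its differential vanishes, and read off $\sigma_P$, $b_P$ and $\mu_P$ by matching the martingale and drift terms. In fact you supply details the paper leaves implicit --- notably that the cross-variation correction reduces to $\sigma_P\,\partial_p\sigma_{Q_A}$ because differentiating the normalization $\int_0^S b_{Q_A}^2\,ds=1$ in $p$ kills the $\int (\partial_p b_{Q_A})\,b_{Q_A}\,ds$ term --- which is precisely how the $\partial_p\sigma_{Q_A}\cdot\sigma_P$ term in the paper's drift equation arises.
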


\begin{proof}
By definition:
\begin{equation}
  Q_{A}(P_A(x,t),t)=x  \label{ItoWE}
\end{equation}

We suppose that (\ref{equforP})\ holds and apply the Ito-Wentzell formula
(see e.g., \cite{Krylov:09})\ to both sides of (\ref{ItoWE})\ yields:
\begin{align*}
  &\mu _{Q_{A}}(P_A(x,t),t)+\frac{\partial Q_{A}}{\partial
    p}(P_A(x,t),t)\mu
  _{P}(x,t)\\
  &\qquad\qquad\qquad\ +\frac{1}{2}\frac{\partial ^{2}Q_{A}}{\partial
    p^{2}}%
  (P_A(x,t),t)\sigma _{P}^{2}(x,t)+\frac{\partial \sigma
    _{Q_{A}}}{\partial p}%
  (P_A(x,t),t)\sigma _{P}(x,t) =0, \\
  &\sigma _{Q_{A}}(P_A(x,t),t)b_{Q_{A}}(P_A(x,t),s,t)+\frac{\partial
    Q_{A}}{%
\partial p}(P_A(x,t),t)\sigma _{P}(x,t)b_{P}(x,s,t) =0
\end{align*}%
\end{proof}

{\it Proof of Theorem \ref{thm:2}}

For fact (F1), we first check assumptions (RF)\ in \cite{KallRhein:09}:

(i) Since $L_L(\vartheta ,t)$ is a semimartingale, it is a strong integrator.

(ii) We calculate 
\begin{equation*}
  \frac{\partial ^{2}L_L(\vartheta ,t)}{\partial \vartheta ^{2}}=\frac{\partial 
  }{\partial x}P_{A}(\vartheta ,t)=\frac{1}{\frac{\partial Q^{A}}{\partial p}%
    (P_{A}(\vartheta ,t))}.
\end{equation*}

Since $Q^{A}$ is strictly decreasing and twice differentiable in $p$, then
its inverse $P_{A}$ is differentiable in $\vartheta $ and $\frac{\partial
^{2}L_L(\vartheta ,t)}{\partial \vartheta ^{2}}$ is continuous in $\vartheta $.

(iii)\ Since $\frac{\partial L_L(\vartheta ,t)}{\partial \vartheta}%
=P_A(\vartheta ,t)$ is a semimartingale, it is a strong integrator.

(iv) By the Lemma \ref{lem:2} the quadratic variation of $L_L(\vartheta ,.)$ is equal to
\begin{equation*}
  \lbrack L_L(\vartheta ,.), L_L(\vartheta ,.)]_{t}=\int_{0}^{t}\int_{0}^{\vartheta
  }\left( \frac{\sigma _{Q}(P_A(x,t),t)}{\frac{\partial Q}{\partial p}(P_A(x,t),t)}%
  \right) ^{2}dt.
\end{equation*}

This is clearly strictly increasing if $\sigma _{Q}(P_A(x,t),t)$ is uniformly
bounded away from zero, which holds if $\sigma _{Q}(p,t)$ is uniformly
bounded away from zero on $0<p<S$.

Thus assumptions (RF) are satisfied. We now define the market price
of risk as
\begin{equation*}
\lambda ^{(x)}(s,t)=\frac{\mu _{P}(x,t)}{\sigma _{P}(x,t)}.
\end{equation*}

Since $\mu _{P}(x,t)$ is bounded and $\sigma _{P}(x,t)$ is non-zero,
then the assumption (UB) in \cite{KallRhein:09} is satisfied. To check
the assumption (UI) we refer the reader to \cite{KallRhein:09} for the
definition of $\lambda _{t}^{(n)}$ and $[M^{(n)},M^{(n)}]$. Since both
$Q_{A}$ and $Q_{L}$ are uniformly bounded, the integrand
$\lambda_{t}^{(n)}$ is bounded. By the same reason, and because of the
continuity (in time) of $Q_{A}$, the quadratic variation
$[M^{(n)},M^{(n)}]$ is bounded. Thus (UI) holds, and Theorem 3.5 in
\cite{KallRhein:09} holds, proving Fact (F2).

Fact (F2) follows directly from the comments after Theorem 3.5 in
\cite{KallRhein:09}, which we quote here, while adjusting for our
notation and formulae numbering:

``Consider now again the dynamics \eqref{eq:bb04} of the real wealth
process, and let $\theta$ be such that $\int L_L(\theta, ds)$ is
bounded from below (the transaction costs term in \eqref{eq:bb04} can
be avoided by the large trader by using only tame strategies). An {\it
  arbitrage opportunity} is an admissible strategy such that we have
for the associated real wealth process $V^{\theta}$ that
$V^{\theta}(0)\le 0,\ V^{\theta}(T)\ge 0$ $\mathbb{P}$-a.s., and
$\mathbb{P} (V^{\theta}(T) > 0) > 0$. By Theorem 3.5 in
\cite{KallRhein:09}, there exists a probability measure $\mathbb{Q}$
such that $\int L_L(\theta, ds)$ is a $\mathbb{Q}$-local martingale,
hence a supermartingale. It follows now from the dynamics
\eqref{eq:bb04} of the real wealth process that
$\mathbb{E}_{\mathbb{Q}}[V^{\theta}(T)]\le V^{\theta}(0)$ which, as
$\mathbb{Q}$ is equivalent to $\mathbb{P}$, excludes arbitrage
opportunities for the large trader.``

We can now prove the continuity of $\pi (t)$. Since by assumption
uncross orders result in $Q_{L}$ being continuous in time, a
discontinuity can arise only from cross orders. Suppose that the large
trader is a net buyer (for a sell strategy the argument is identical),
i.e, that there is an $\varepsilon >0$ such that for all $0\leq \delta \leq
\varepsilon $
\begin{equation*}
  \theta (t+\delta )-\theta (t)\geq 0.
\end{equation*}



Suppose that
\begin{align}
  \label{eq:1}
  \pi(t)-\pi(t_-)\ge\epsilon_2 >0.
\end{align}
Then, since $\partial Q_A/\partial p >0$ and $Q_A$ is assumed to be
twice differentiable
$$-Q_A(\pi(t),t)+Q_A(\pi(t_-),t)\ge\epsilon_3>0.$$
By continuity in time of $Q_A$
$$-Q_A(\pi(t),t_+)+Q_A(\pi(t_-),t)\ge\epsilon_4>0.$$
An examination of terms in Lemma 3.3 in \cite{BankBaum:04} shows that
discontinuous strategies are suboptimal for the trader, i.e.
\begin{align}
  \label{eq:2}
  \theta(t_+)-\theta(t)=0.
\end{align}
By definition
$$\theta(t_+)-\theta(t) = -Q_A(\pi(t),t_+)+Q_A(\pi(t_-),t).$$
However, this shows, that \eqref{eq:1} contradicts \eqref{eq:2}. Thus
$\pi$ is continuous, proving fact (F4), and 
\begin{align}
  \label{eq:3}
  Q_L(\pi(t), t)-Q_L(\pi(t_-), t)=0.
\end{align}
By definition also
\begin{align}
  \label{eq:4}
  \theta(t_+)-\theta(t) &= Q_L(\pi(t),t_+)-Q_L(\pi(t_-),t)\\
  &= Q_L(\pi(t),t_+)-Q_L(\pi(t),t)+Q_L(\pi(t),t) -Q_L(\pi(t_-),t).\nonumber
\end{align}
Combining \eqref{eq:2}, \eqref{eq:3} and \eqref{eq:4} we obtain that
$$Q_L(\pi(t),t_+)-Q_L(\pi(t),t)=0.$$
This shows that the net demand curve is continuous in $t$, thus
proving fact (F3).
 




Finally, we build the curve:
\begin{equation*}
  Q=Q_{L}+Q_{A}.
\end{equation*}

Suppose that a new atomistic trader comes to the market. She will
trade at a price $\pi (t)$ and will have liquidity costs:
\begin{equation*}
  \int\limits_{0}^{t}L(\theta (u),du)=\int_{0}^{t}\theta (u)d\pi (u).
\end{equation*}

Since $\int\limits_{0}^{t}L(\theta (u),du)$ is a $\mathbb{Q}$-martingale,
therefore $\pi $ is a $\mathbb{Q}$-martingale, too.

\begin{flushright}
  $\qedsymbol$
\end{flushright}

\bibliographystyle{alpha}
\bibliography{../AUX/finance}

\begin{thebibliography}{WW02}

\bibitem[arc11]{arca:11}
{NYSE} {A}rca{B}ook {FTP} {C}lient {S}pecification, version 1.5a.
\newblock {\em www.nyxdata.com/doc/37716}, June 2011.

\bibitem[AS08]{StoikAvell:08}
Marco Avellaneda and Sasha Stoikov.
\newblock High-frequency trading in a limit order book.
\newblock {\em Quantitative Finance}, 8(3):217--224, April 2008.

\bibitem[BB04]{BankBaum:04}
Peter Bank and Dietmar Baum.
\newblock Hedging and portfolio optimization in financial markets with a large
  trader.
\newblock {\em Math. Finance}, 14(1):1--18, 2004.

\bibitem[BLT06]{BollLitvTauch:2006}
Tim Bollerslev, Julia Litvinova, and George Tauchen.
\newblock Leverage and volatility feedback effects in high-frequency data.
\newblock {\em Journal of Financial Econometrics}, 4(3):353--384, 2006.

\bibitem[BRZ09]{BanRussZhu:09}
Federico Bandi, Jeffrey Russell, and Yinghua Zhu.
\newblock Using high-frequency data in dynamic portfolio choice.
\newblock {\em Working paper}, 2009.

\bibitem[BS73]{BlckSch:73}
Fischer Black and Myron Scholes.
\newblock The pricing of options and corporate liabilities.
\newblock {\em The Journal of Political Economy}, 81(3):637--654, 1973.

\bibitem[{\c{C}}JP04]{CetinJarrProt:04}
Umut {\c{C}}etin, Robert~A. Jarrow, and Philip Protter.
\newblock Liquidity risk and arbitrage pricing theory.
\newblock {\em Finance and Stochastics}, 8(3):311--341, 2004.

\bibitem[{\c{C}}R07]{CetRog:07}
Umut {\c{C}}etin and Chris Rogers.
\newblock Modeling liquidity effects in discrete time.
\newblock {\em Mathematical Finance}, 17:15--29, 2007.

\bibitem[{\c C}ST10]{CetSonTou:10}
Umut {\c C}etin, H.~Mete Soner, and Nizar Touzi.
\newblock Option hedging for small investors under liquidity costs.
\newblock {\em Finance and Stochastics}, 14(3):317--341, 2010.

\bibitem[CT06]{CarTer:03}
Ren\'e Carmona and Michael Tehranchi.
\newblock {\em Interest Rate Models: and Infinite Dimensional Stochastic
  Analysis Perspective}.
\newblock Springer-Verlag, 2006.

\bibitem[DPZ92]{DaPratoZab:92}
Guiseppe Da~Prato and Jerzy Zabczyk.
\newblock {\em Stochastic Equations in Infinite Dimensions}.
\newblock Cambridge University Press, 1992.

\bibitem[Eng00]{Engle:00}
Robert Engle.
\newblock The econometrics of ultra-high-frequency data.
\newblock {\em Econometrica}, 68(1):1--23, January 2000.

\bibitem[Fre98]{Frey:98}
R{\"u}diger Frey.
\newblock Perfect option hedging for a large trader.
\newblock {\em Finance and Stochastics}, 2:115--141, 1998.

\bibitem[GS11]{GokaySoner:11}
Selim G{\"o}kay and H.~Mete Soner.
\newblock Hedging in an illiquid binomial market.
\newblock {\em Preprint}, 2011.

\bibitem[Hau08]{Haut:08}
Nikolaus Hautsch.
\newblock Capturing common components in high-frequency financial time series:
  A multivariate stochastic multiplicative error model.
\newblock {\em Journal of Economic Dynamics and Control}, 32(12):3978--4015,
  December 2008.

\bibitem[Jar92]{Jarr:92}
Robert Jarrow.
\newblock Market manipulation, bubbles, corners and short squeezes.
\newblock {\em Journal of Financial and Quantitative Analysis}, 27:311--336,
  1992.

\bibitem[Jar94]{Jarr:94}
Robert Jarrow.
\newblock Derivative securities markets, market manipulation and option pricing
  theory.
\newblock {\em Journal of Financial and Quantitative Analysis}, 29:241--261,
  1994.

\bibitem[KR09]{KallRhein:09}
Jan Kallsen and Thorsten Rheinl{\"a}nder.
\newblock Financial markets with a large trader: an approach via carmona and
  nualart integration.
\newblock {\em Unpublished}, 2009.

\bibitem[Kry09]{Krylov:09}
Nikolai Krylov.
\newblock On the {I}to-{W}entzell formula for distribution-valued processes and
  related topics.
\newblock {\em arXiv:0904.2752v2}, 2009.

\bibitem[Kyl85]{Kyle:85}
Albert~S. Kyle.
\newblock Continuous auctions and insider trading.
\newblock {\em Econometrica}, 53:1315--1335, 1985.

\bibitem[PS98a]{PapanSirc:98}
George Papanicolaou and Ronnie Sircar.
\newblock General {B}lack-{S}choles models accounting for increased market
  volatility from hedging strategies.
\newblock {\em Applied Mathematical Finance}, 5:45--82, 1998.

\bibitem[PS98b]{PlatSch:98}
Eckhard Platen and Martin Schweizer.
\newblock On feedback effects from hedging derivatives.
\newblock {\em Math. Finance}, 8(1):67--84, 1998.

\bibitem[Ros09]{Rosu:09}
Ioanid Rosu.
\newblock A dynamic model of the limit order book.
\newblock {\em Review of Financial Studies}, 22:4601--4641, 2009.

\bibitem[RS10]{RogSin:10}
Chris Rogers and Surbjeet Singh.
\newblock The cost of illiquidity and its effects on hedging.
\newblock {\em Mathematical Finance}, 20(4):597--615, October 2010.

\bibitem[SW00]{SchWilm:00}
Philipp~J. Sch{\"o}nbucher and Paul Wilmott.
\newblock The feedback effect of hedging in illiquid markets.
\newblock {\em SIAM Journal on Applied Mathematics}, 61(1):232--272
  (electronic), 2000.

\bibitem[WW02]{WisWang:02}
S.~Wiswanathan and James Wang.
\newblock Market architecture: limit-order books versus dealership markets.
\newblock {\em Journal of Financial Markets}, 5(2):127--167, April 2002.

\end{thebibliography}



\end{document}